\spnewtheorem{observation}[lemma]{Observation}{\bfseries}{\itshape}
\newcommand{\eqdf}{\stackrel{\rm def}{=}}
\newcommand{\set}[1]{\left\{ #1 \right\}}
\newcommand{\paren}[1]{\left( #1 \right)}
\newcommand{\inv}[1]{\frac{1}{#1}}
\newcommand{\ceil}[1]{\left\lceil {#1} \right\rceil}
\newcommand{\half}{\frac{1}{2}}
\newcommand{\argmax}{\operatornamewithlimits{argmax}}
\newcommand{\bcfr}{\textsc{BCFR}\xspace}
\newcommand{\bcvr}{\textsc{BCVR}\xspace}
\newcommand{\partition}{\textsc{Partition}\xspace}
\newcommand{\threepart}{\textsc{3-Partition}\xspace}
\newcommand{\eps}{\varepsilon}
\newcommand{\fixed}{\textbf{Fixed}\xspace}
\newcommand{\variable}{\textbf{Variable}\xspace}
\begin{document}

\title{
Maximizing Barrier Coverage Lifetime \\ with Mobile Sensors}

\author{%
Amotz Bar-Noy\inst{1}
\and 
Dror Rawitz\inst{2}
\and 
Peter Terlecky\inst{1}
}

\institute{%
The Graduate Center of the City University of New York, NY 10016, USA \\
\email{amotz@sci.brooklyn.cuny.edu}, \email{pterlecky@gc.cuny.edu}
\and
School of Electrical Engineering, Tel Aviv University, Tel-Aviv 69978, 
Israel \\
\email{rawitz@eng.tau.ac.il}
}

\maketitle

\begin{abstract}
Sensor networks are ubiquitously used for detection and tracking and
as a result covering is one of the main tasks of such networks.  We
study the problem of maximizing the \emph{coverage lifetime} of a
barrier by mobile sensors with limited battery powers, where the
coverage lifetime is the time until there is a breakdown in coverage
due to the death of a sensor.
Sensors are first deployed and then coverage commences.  Energy is
consumed in proportion to the distance traveled for mobility, while
for coverage, energy is consumed in direct proportion to the radius of
the sensor raised to a constant exponent.
We study two variants which are distinguished by whether the sensing
radii are given as part of the input or can be optimized, the fixed
radii problem and the variable radii problem.
We design parametric search algorithms for both problems for the case
where the final order of the sensors is predetermined 
and for the case where sensors are initially located at barrier
endpoints.
In contrast, we show that the variable radii problem is strongly
NP-hard and provide hardness of approximation results for fixed radii
for the case where all the sensors are initially co-located at an
internal point of the barrier.
\end{abstract}



\section{Introduction}

One important application of Wireless Sensor Networks is monitoring a
barrier for some phenomenon.  By covering the barrier, the sensors
protect the interior of the region from exogenous elements more
efficiently than if they were to cover the interior area.  
In this paper we focus on a model in which sensors are battery-powered
and both moving and sensing drain energy.  A sensor can maintain
coverage until its battery is completely depleted.  The network of
sensors cover the barrier until the death of the first sensor, whereby
a gap in coverage is created and the life of the network expires.

More formally, there are $n$ sensors denoted by $\{1,\ldots,n\}$.  Each
sensor $i$ 
has a battery of size $b_i$ and initial position $x_i$.  The coverage
task is accomplished in two phases.  In the \emph{deployment phase},
sensors move from their initial positions to new positions, and in the
\emph{covering phase} the sensors set their sensing radii to fully
cover the barrier.
A sensor which moves a distance $d$ drains $a \cdot d$ amount of
battery on movement for some constant $a \geq 0$.  In the coverage
phase, sensing with a radius of $r$ drains energy per time unit in
direct proportion to $r^\alpha$, for some constant $\alpha \geq 1$
(see e.g.,~\cite{AGMP09,Li}).  The lifetime of a sensor $i$ traveling
a distance $d_i$ and sensing with a radius $r_i$ is given by $L_i=
\frac{b_i-a d_i}{r_i^\alpha}$.  The coverage lifetime of the barrier
is the minimum lifetime of any sensor, $\min_i L_i$.  We seek to
determine a destination $y_i$ and a radius $r_i$, for each sensor $i$,
that maximizes the \emph{barrier coverage lifetime} of the network.

Many parameters govern the length of coverage lifetime, and optimizing
them is hard even for simple variants.  Therefore, most of the past
research adopted natural strategies that try to optimize the lifetime
indirectly.  For example, the duty cycle strategy partitions the
sensors into disjoint groups that take turns in covering the barrier.
The idea is that a good partition would result in a longer lifetime.
Another example is the objective of minimizing the maximum distance
traveled by any of the sensors.  This strategy would maximize the coverage
lifetime for sensors with homogeneous batteries and radii, but would
fail to do so if sensors have non-uniform batteries or radii.  See a
discussion in the related work section.

In this paper we address the lifetime maximization problem directly.
We focus on the \emph{set-up and sense} model in which the sensors are
given one chance to set their positions and sensing radii before the
coverage starts.  
We leave the more general model in which sensors may
adjust their positions and sensing radii during the coverage to future
research.


\paragraph*{\bf Related work.}
There has been previous research on barrier coverage focused on
minimizing a parameter which is proportional to the energy sensors
expend on movement, but not directly modeling sensor lifetimes with
batteries.  Czyzowicz \textit{et al.}~\cite{czyzowicz2009minmax}
assume that sensors are located at initial positions on a line barrier
and that the sensors have fixed and identical sensing radii.  The goal
is to find a deployment that covers the barrier and that minimizes the
maximum distance traveled by any sensor.  Czyzowicz \textit{et al.} 
provide a polynomial time algorithm for this problem.  Chen \textit{et
al.}~\cite{barrier} extended the result to the more general case in
which the sensing radii are non-uniform (but still fixed).

Czyzowicz \textit{et al.}\cite{czyzowicz2010minsum} considered
covering a line barrier with sensors with the goal of minimizing the
sum of the distances traveled by all sensors.  Mehrandish \textit{et
al.}~\cite{Mehrandish2011} considered the same model with the
objective of minimizing the number of sensors which must move
to cover the barrier.  Tan and Wu~\cite{Tan} presented improved
algorithms for minimizing the max distance traveled and minimizing the
sum of distances traveled when sensors must be positioned on a circle
in regular $n$-gon position.  The problems were initially considered
by Bhattacharya \textit{et al.}~\cite{bhat}.
Several works have considered the problem of covering a straight-line
boundary by stationary sensors. Li \textit{et al.}~\cite{Li} look to
choose radii for sensors for coverage which minimize the sum of the
power spent.  Agnetis \textit{et al.}~\cite{AGMP09} seek to choose
radii for coverage to minimize the sum of a quadratic cost function.
Maximizing the network lifetime of battery-powered sensors that cover
a barrier was previously considered for static sensors from a
scheduling point of view.  Buchsbaum \textit{et al.}~\cite{BEJVY07}
and Gibson and Varadarajan~\cite{GibVar09} considered the
\textsc{Restricted Strip Covering} in which sensors are static and
radii are fixed, but sensors may start covering at any time.  Bar Noy
\textit{et al.}~\cite{BarBau11,BBR12,BBR12corr} considered the variant
of this problem in which the radii are adjustable.

The only previous result we are aware of that considered a battery model with
movement and transmission on a line is by Phelan \textit{et
al.}~\cite{Phelan} who considered the problem of maximizing the
transmission lifetime of a sender to a receiver on a line using mobile
relays.   


\paragraph{\bf Our contribution.}
We introduce two problems in the model in which sensors are
battery-powered and both moving and sensing drain energy.  In the
\textsc{Barrier Coverage with Variable Radii} problem (abbreviated
\bcvr) we are given initial locations and battery powers, and the goal
is to find a deployment and radii that maximizes the lifetime.  In the
\textsc{Barrier Coverage with Fixed Radii} problem (\bcfr) we are also
given a radii vector $\rho$, and the goal is to find a deployment and
a radii assignment $r$, such that $r_i \in
\set{0,\rho_i}$, for every $i$, that maximizes the lifetime.

We show in Appendix~\ref{sec:friction} that the static ($a=\infty$)
and fully dynamic ($a=0$) cases are solvable in polynomial time for
both \bcfr and \bcvr.

In Section~\ref{sec:search} we consider constrained versions of \bcfr
and \bcvr in which the input contains a total order on the sensors
that the solution is required to satisfy.  We design polynomial-time
algorithms for the decision problems in which the goal is to determine
whether a given lifetime $t$ is achievable and to compute a solution
with lifetime $t$, if $t$ is achievable.  Using these decision
algorithms we present parametric search algorithms for constrained
\bcfr and \bcvr.

We consider the case where the sensors are initially located on the
edges of the barrier (i.e., $x \in \set{0,1}^n$) in
Section~\ref{sec:zero}.  For both \bcfr and \bcvr, we show that, for
every candidate lifetime $t$, we may assume a final ordering of the
sensors.  (The ordering depends only on the battery powers in the
\bcvr case, and it can be computed in polynomial time in the \bcfr
case.)  Using our decision algorithms, we obtain parametric search
algorithms for this special case.


On the negative side, we show that there is no polynomial time
multiplicative approximation algorithm for \bcfr and that there is no
polynomial time algorithm that computes solutions that are within an
additive factor $\eps$, for some constant $\eps>0$, unless P$\neq$NP.
Both results hold even if $x = p^n$, for some $p \in (0,1)^n$.  We
also show that \bcvr is strongly NP-hard.  The hardness results apply
to any $0 < a < \infty$ and $\alpha \geq 1$ and they are given in
Section~\ref{sec:hardness}.

Finally, we note that several proofs 
were relegated to the appendix due to space considerations.

\section{Preliminaries}

In this section we formally define the problems and introduce the
notation that will be used throughout the paper.

\paragraph{\bf Model.}
We consider a setting in which $n$ mobile sensors with finite
batteries are located on a barrier represented by the interval
$[0,1]$.  The initial position and battery power of sensor $i$ is
denoted by $x_i$ and $b_i$, respectively.  We denote
$x=(x_1,\ldots,x_n)$ and $b=(b_1,\ldots,b_n)$.
%
The sensors are used to cover the barrier, and they can achieve this
goal by moving and sensing.  In our model the sensors first move, and
afterwards each sensor covers an interval that is determined by its
sensing radius.  In motion, energy is consumed in proportion to the
distance traveled, namely a sensor consumes $a \cdot d$ units of
energy by traveling a distance $d$, where $a$ is a constant.  A sensor
$i$ consumes $r_i^\alpha$ energy per time unit for sensing, where
$r_i$ is the sensor's radius and $\alpha \geq 1$ is a constant.

More formally, the system works in two phases.  In the
\emph{deployment phase} sensors move from the initial positions $x$ to
new positions $y$.  This phase is said to occur at time $0$.  In this
phase, sensor $i$ consumes $a|y_i-x_i|$ energy.  Notice that sensor
$i$ may be moved to $y_i$ only if $a |y_i-x_i| \leq b_i$.  
In the \emph{covering phase} sensor $i$ is assigned a sensing radius
$r_i$ and covers the interval $[y_i-r_i,y_i+r_i]$.  (An example is
given in Figure~\ref{fig:model}.)
A pair $(y,r)$, where $y$ is a deployment vector and $r$ is a
sensing radii vector, is called \emph{feasible} if
\begin{inparaenum}[(i)]
\item $a |y_i-x_i| \leq b_i$, for every sensor $i$, and 
\item $[0,1] \subseteq \sum_i [y_i-r_i,y_i+r_i]$.
\end{inparaenum}
Namely, $(y,r)$ is feasible, if the sensors have enough power to reach
$y$ and each point in $[0,1]$ is covered by some sensor.

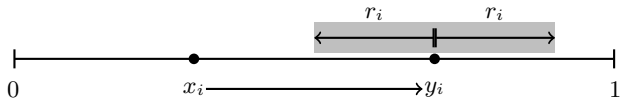
\begin{figure}[t]
\centering
\begin{tikzpicture}[scale=0.8,auto,thick,fill=lightgray,
dot/.style={circle,fill=black,inner sep=0pt,minimum size=4pt}]
  \draw (0,0)[arrows=|-|] -- (10,0);
  \draw (0,-0.5) node {$0$};
  \draw (10,-0.5) node {$1$};
  \node at (3,0) [dot] {};
  \draw (3,-0.5) node {$x_i$};
  \fill (5,0.1) rectangle (9,0.6);
  \draw[arrows=|->] (7,0.35) -- (9,0.35);
  \draw (8,0.75) node {$r_i$};
  \draw[arrows=|->] (7,0.35) -- (5,0.35);
  \draw (6,0.75) node {$r_i$};
  \node at (7,0) [dot] {};
  \draw (7,-0.5) node {$y_i$};
  \draw[arrows=->] (3.2,-0.5) -- (6.8,-0.5);
\end{tikzpicture}
\vspace{-5pt}
\label{fig:model}
\caption{Sensor $i$ moves from $x_i$ to $y_i$ and covers the 
interval $[y_i-r_i,y_i+r_i]$.}
\end{figure}

Given a feasible pair $(y,r)$, the \emph{lifetime} of a sensor $i$,
denoted $L_i(y,r)$, is the time that transpires until its battery is
depleted.  If $r_i>0$,
\(
L_i(y,r) = \frac{b_i-a|y_i-x_i|}{r_i^\alpha}
\), 
and if $r_i = 0$, we define $L_i(y,r) = \infty$.
Given initial locations $x$ and battery powers $b$, the
\textit{barrier coverage lifetime} of a feasible pair $(y,r)$, where
$y$ is a deployment vector and $r$ is a sensing radii vector is
defined as
\(
L(y,r) = \min_i L_i(y,r)
\).
We say that a $t$ is \emph{achievable} if there exists a feasible pair
such that $L_i(y,r) = t$.



\paragraph*{\bf Problems.}
We consider two problems which are distinguished by whether the radii
are given as part of the input.
In the \textsc{Barrier Coverage with Variable Radii} problem (\bcvr)
we are given initial locations $x$ and battery powers $b$, and the
goal is to find a feasible pair $(y,r)$ of locations and radii that
maximizes $L(y,r)$.
In the \textsc{Barrier Coverage with Fixed Radii} problem (\bcfr) 
we are also given a radii vector $\rho$, and the goal is to find a
feasible pair $(y,r)$, such that $r_i \in \set{0,\rho_i}$ for every
$i$, that maximizes $L(y,r)$.
Notice that a necessary condition for achieving non-zero lifetime is
$\sum_i 2\rho_i \geq 1$.


Given a total order $\prec$ on the sensors, we consider the
\emph{constrained} variants of \bcvr and \bcfr, in which the
deployment $y$ must satisfy the following requirement: $i \prec j$ if
and only if $y_i \leq y_j$.  That is, we are asked to maximize barrier
coverage lifetime subject to the condition that the sensors are
ordered by $\prec$.
Without loss of generality, we assume that the sensors are numbered
according to the total order.


%

\section{Constrained Problems and Parametric Search}
\label{sec:search}

We present polynomial time algorithms that, given $t > 0$, decide
whether $t$ is achievable for constrained \bcfr and constrained \bcvr.
If $t$ is achievable, a solution with lifetime at least $t$ is
computed.  We use these algorithms to design parametric search
algorithms for both problems.

We use the following definitions for both \bcfr and \bcvr.  Given an
order requirement $\prec$, we define:
\begin{align*}
l(i) & \textstyle
       \eqdf \max\set{\max_{j \le i} \set{x_j - b_j/a},0}
&  
u(i) & \textstyle
       \eqdf \min\set{\min_{j \ge i} \set{x_j + b_j/a},1}
\end{align*}
$l(i)$ and $u(i)$ are the leftmost and rightmost points reachable by
$i$.

\begin{observation}
\label{obs:lu}
Let $(y,r)$ be a feasible solution that satisfies an order requirement
$\prec$.  Then $l(i) \leq u(i)$ and $y_i \in [l(i),u(i)]$, for every
$i$.
\end{observation}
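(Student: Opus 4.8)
The plan is to observe that, for each~$i$, the single coordinate~$y_i$ is already pinned between $\max_{j\le i}\set{x_j-b_j/a}$ and $\min_{k\ge i}\set{x_k+b_k/a}$, and (after a harmless normalization) inside $[0,1]$; clamping these bounds by~$0$ and~$1$ then gives both assertions simultaneously. Concretely, I would first extract the two facts hidden in the hypotheses. Feasibility gives $a\abs{y_i-x_i}\le b_i$ for every~$i$, hence $x_i-b_i/a\le y_i\le x_i+b_i/a$ (we are in the regime $0<a<\infty$; the degenerate cases $a=0$ and $a=\infty$ are treated separately in Appendix~\ref{sec:friction}). The order requirement, together with the convention that sensors are numbered by~$\prec$, gives $y_1\le\cdots\le y_n$, so $y_j\le y_i$ whenever $j\le i$ and $y_i\le y_k$ whenever $k\ge i$. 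Finally I would note that we may assume $y_i\in[0,1]$: since $x_i\in[0,1]$, if $y_i<0$ then relocating sensor~$i$ to~$0$ does not increase $\abs{y_i-x_i}$ and cannot shrink $[y_i-r_i,y_i+r_i]\cap[0,1]$, so feasibility and every lifetime are preserved; the case $y_i>1$ is symmetric.

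The second step is to chain these inequalities. Fix~$i$. For each $j\le i$ we have $x_j-b_j/a\le y_j\le y_i$, and also $0\le y_i$; taking the maximum over $j\le i$ and over~$0$ yields $l(i)\le y_i$. Symmetrically, for each $k\ge i$ we have $y_i\le y_k\le x_k+b_k/a$, and $y_i\le 1$; taking the minimum over $k\ge i$ and over~$1$ yields $y_i\le u(i)$. Hence $l(i)\le y_i\le u(i)$, which is exactly the inclusion $y_i\in[l(i),u(i)]$ together with $l(i)\le u(i)$.

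There is essentially no obstacle here: the statement is just the displacement bound $\abs{y_i-x_i}\le b_i/a$ sandwiched against the monotonicity of~$y$ forced by~$\prec$. The only point that deserves a sentence is the normalization $y_i\in[0,1]$, which is needed only for the inclusion $y_i\in[l(i),u(i)]$ (the inequality $l(i)\le u(i)$ already follows from the chain above together with the trivial observations $x_j-b_j/a\le 1$ and $x_k+b_k/a\ge 0$ for all~$j,k$, which make the clampings by~$0$ and~$1$ inert).
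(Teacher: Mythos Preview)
Your proposal is correct and rests on the same observation as the paper---combining the displacement bound $\abs{y_j-x_j}\le b_j/a$ with the monotonicity $y_1\le\cdots\le y_n$ forced by~$\prec$. The paper's own proof is terser and argued by contrapositive (assume $u(i)<l(i)$, extract witnesses $k\le i\le j$ with $x_j+b_j/a<x_k-b_k/a$, and conclude no order-respecting deployment exists); your direct chaining argument is in fact more complete, since it explicitly establishes $y_i\in[l(i),u(i)]$ and handles the clamping by $0$ and $1$, points the paper leaves implicit.
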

\begin{proof}
If there exists $i$ such that $u(i)<l(i)$, then there are two sensors
$j$ and $k$, such that where $k < j$ and $x_j + b_j/a < x_k - b_k/a$.
Hence, no deployment that satisfies the total order exists.
\qed
\end{proof}


\subsection{Fixed Radii}

We start with an algorithm that solves the constrained \bcfr decision
problem.


Given a \bcfr instance and a lifetime $t$, we define
\begin{align*}
s(i) & \eqdf \max \set{x_i - (b_i - t \rho_i^\alpha)/a, l(i)}
& 
e(i) & \eqdf \min \set{x_i + (b_i - t \rho_i^\alpha)/a, u(i)}
\end{align*}
If $t \rho_i^\alpha \leq b_i$, then $s(i) \leq e(i)$.  Moreover $s(i)$
and $e(i)$ are the leftmost and rightmost points that are reachable by
$i$, if $i$ participates in the cover for $t$ time.  
%
($l(i)$ and $u(i)$ can be replaced by $l(i-1)$ and $u(i-1)$
in the above definitions.)

\begin{observation}
Let $(y,r)$ be a feasible pair with lifetime $t$ that satisfies 
an order~$\prec$.  For every 
$i$, if $r_i = \rho_i$,
it must be that $t \rho_i^\alpha \leq b_i$ and $y_i \in
[s(i),e(i)]$.
\end{observation}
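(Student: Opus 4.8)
The plan is to unpack the definitions of feasibility and lifetime and show that each of the two claimed conditions is forced. First I would observe that if $r_i = \rho_i$, then in particular $r_i > 0$, so the lifetime of sensor $i$ is $L_i(y,r) = (b_i - a|y_i - x_i|)/\rho_i^\alpha$. Since $(y,r)$ has lifetime $t$ and $L(y,r) = \min_j L_j(y,r)$, we have $L_i(y,r) \geq t$, i.e. $(b_i - a|y_i - x_i|)/\rho_i^\alpha \geq t$, which rearranges to $a|y_i - x_i| \leq b_i - t\rho_i^\alpha$. In particular $b_i - t\rho_i^\alpha \geq a|y_i - x_i| \geq 0$, giving $t\rho_i^\alpha \leq b_i$ — the first claim.

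Next I would derive the containment $y_i \in [s(i), e(i)]$. The inequality $a|y_i - x_i| \leq b_i - t\rho_i^\alpha$ is equivalent to $x_i - (b_i - t\rho_i^\alpha)/a \leq y_i \leq x_i + (b_i - t\rho_i^\alpha)/a$. Combining this with Observation~\ref{obs:lu}, which gives $l(i) \leq y_i \leq u(i)$ since $(y,r)$ is feasible and satisfies $\prec$, we get $\max\set{x_i - (b_i - t\rho_i^\alpha)/a,\, l(i)} \leq y_i \leq \min\set{x_i + (b_i - t\rho_i^\alpha)/a,\, u(i)}$, which is exactly $s(i) \leq y_i \leq e(i)$. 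The remark in the text that $b_i - t\rho_i^\alpha \geq 0$ guarantees this interval is nonempty (together with $l(i) \leq u(i)$) is already covered by the preliminary observation that $t\rho_i^\alpha \leq b_i$ implies $s(i) \leq e(i)$.

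This is essentially a direct verification rather than a theorem with a genuine obstacle; the only mild subtlety is making sure the two sources of constraints — the lifetime requirement on sensor $i$ and the mobility/ordering constraint captured by $l(i), u(i)$ — are combined correctly, and that the $r_i > 0$ case is the one we are in so that the formula for $L_i$ (rather than the $L_i = \infty$ convention) applies. I would state the argument in two short sentences, invoking Observation~\ref{obs:lu} for the $[l(i),u(i)]$ part, and close with \qed.
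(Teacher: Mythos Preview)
Your argument is correct and is exactly the natural verification: combine the lifetime inequality $L_i(y,r)\ge t$ with feasibility and Observation~\ref{obs:lu}. The paper in fact states this observation without proof, treating it as immediate from the definitions of $s(i)$ and $e(i)$; your write-up spells out precisely what the paper leaves implicit.
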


Algorithm~\fixed is our decision algorithm for constrained \bcfr.  It
first computes $l$, $u$, $s$, and $e$.  If there is a sensor $i$ such
that $l(i) > u(i)$, it outputs NO.  Otherwise it deploys the sensors
one by one according to $\prec$.  
Iteration $i$ starts with checking whether $i$ can extend the current
covered interval $[0,z]$.  If it cannot, $i$ is moved to the left as
much as possible (power is used only for moving), and it is powered
down ($r_i$ is set to 0).  If $i$ can extend the current covered
interval, it is assigned radius $\rho_i$, and it is moved to the
rightmost possible position, while maximizing the right endpoint of
the currently covered interval (i.e., $[0,z]$).  If $i$ is located to
the left of a sensor $j$, where $j<i$, then $j$ is moved to $y_i$.

\begin{algorithm}[t]
\begin{small}
\begin{algorithmic}[1]
\caption{: \fixed$(x,b,\rho,t)$}
\label{Alg:decide-fixed}
\State Compute $l$, $u$, $s$, and $e$
\State \textbf{if} there exists $i$ such that $u(i)<l(i)$
       \textbf{then} \Return{NO}
\State $z \gets 0$
\For{$i=1 \to n$}
   \If{$t \rho_i^\alpha > b_i$ or $z \not\in [s(i)-\rho_i,e(i)+\rho_i)$}
      \State $y_i \gets \max\set{l(i),y_{i-1}}$
             and $r_i \gets 0$
\Comment{$y_0 = 0$}
   \Else
      \State $y_i \gets \min\set{z+\rho_i,e(i)}$
             and $r_i \gets \rho_i$
%
%
      \State $S \gets \set{k : k < i, y_i < y_k}$
      \State $y_k \gets y_i$ and $r_k \gets 0$, for every $k \in S$
      \State $z \gets y_i + r_i$
   \EndIf
\EndFor
\State \textbf{if} $z<1$ \textbf{then} \Return{NO}
\State \textbf{else} \Return{YES}
\end{algorithmic}
\end{small}
\end{algorithm}

As for the running time, $l$, $u$, $s$ and $e$ can be computed in
$O(n)$ time.  There are $n$ iterations,
each takes $O(n)$ time. Hence, the running time of Algorithm~\fixed is
$O(n^2)$.
It remains to prove the correctness of the algorithm.

\begin{theorem}
\label{thm:search-bcfr}
\fixed solves the constrained \bcfr decision problem.
\end{theorem}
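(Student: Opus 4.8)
The plan is to prove two implications. \emph{Soundness:} if \fixed returns YES then the pair $(y,r)$ it outputs is feasible, obeys $\prec$, and has lifetime at least $t$. \emph{Completeness:} if some feasible pair with lifetime at least $t$ obeying $\prec$ exists, then \fixed returns YES. For the latter I would first note that by Observation~\ref{obs:lu} such a pair forces $l(i)\le u(i)$ for every $i$, so the algorithm does not stop at line~2, and it then suffices to argue that the variable $z$ reaches~$1$.

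For soundness I would carry three invariants through the loop by induction on the iteration index $i$. (1)~$y_1\le\cdots\le y_i$: the only subtle point is the push-back step, where one observes that, since the prefix is already sorted and $l(\cdot)$ is nondecreasing, the set $S=\set{k<i:y_i<y_k}$ is always a \emph{suffix} of $\set{1,\ldots,i-1}$, so collapsing all of $S$ onto $y_i$ keeps the prefix sorted. (2)~Every placed sensor satisfies $y_i\in[l(i),u(i)]$, and every \emph{currently} powered-on sensor additionally satisfies $y_i\in[s(i),e(i)]$ and $t\rho_i^\alpha\le b_i$; here one uses that a powered-on sensor is never subsequently moved (a push-back also sets $r_k\gets 0$), and that monotonicity of $l,u$ handles pushed-back sensors, since such a sensor is moved to some $y_{i'}\ge l(i')\ge l(i)$ that is strictly below its previous position in $[l(i),u(i)]$. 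By the definitions of $s$ and $e$ this yields $a|y_i-x_i|\le b_i-t\rho_i^\alpha$, hence $L_i\ge t$ for powered-on sensors, while powered-off ones have $L_i=\infty$. (3)~After iteration $i$ the interval $[0,z_i]$ is covered by the currently powered-on sensors among $\set{1,\ldots,i}$. Granting all three, a YES answer means $z_n\ge 1$ and $l(i)\le u(i)$ everywhere, so $(y,r)$ is feasible, ordered, and has lifetime $\ge t$.

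Invariant~(3) is the crux, and I expect it to be the main obstacle; I would strengthen it to: the powered-on sensors, taken in order of position, form a \emph{chain} of pairwise overlapping (or touching) intervals whose union is exactly $[0,z_i]$ and whose leftmost interval reaches $\le 0$. A powered-off iteration changes neither $z$ nor the chain, so the step is immediate there. In a powered-on iteration, sensor $i$ is placed at $y_i=\min\set{z_{i-1}+\rho_i,e(i)}$, so $y_i-\rho_i\le z_{i-1}$ and $i$ would simply extend the chain if $S$ contained no powered-on sensor. The danger is that $S$ does contain a powered-on $k$, whose demotion could open a hole in $[0,z_{i-1}]$. This is ruled out as follows: $k\in S$ gives $y_i<y_k$, and since right endpoints along the chain are increasing, $y_k\le z_{i-1}-\rho_k<z_{i-1}$, forcing $y_i=e(i)$; the power-on test then gives $z_i=e(i)+\rho_i>z_{i-1}\ge y_k+\rho_k$, which with $y_i<y_k$ forces $\rho_i>\rho_k$; and this size comparison is exactly what makes the interval of $i$ overlap the surviving prefix of the chain (and reach $\le 0$ in the degenerate case where the whole chain is demoted). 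Making this bookkeeping airtight, including the boundary cases, is the real work.

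For completeness, fix a feasible pair $(y^*,r^*)$ with $L(y^*,r^*)\ge t$ obeying $\prec$, and let $z_i^*$ be the right endpoint of the maximal prefix of $[0,1]$ covered by $\set{1,\ldots,i}$ under $(y^*,r^*)$, so $z_n^*\ge 1$. I would show $z_i\ge z_i^*$ for every $i$ by induction. The key preliminary fact is that the order constraint forbids ``bridging'': $z_i^*=z_{i-1}^*$ when $r_i^*=0$, and $z_i^*\le\max\set{z_{i-1}^*,\,y_i^*+\rho_i}$ when $r_i^*=\rho_i$ --- otherwise some $j<i$ covers a point beyond $y_i^*+\rho_i$, so $y_j^*+\rho_j>y_i^*+\rho_i$, while the first uncovered point $g$ just above $z_{i-1}^*$ lies strictly to the left of sensor $j$'s interval and is covered by sensor $i$, and these facts force simultaneously $\rho_j>\rho_i$ (from $y_j^*\le y_i^*$) and $\rho_j<\rho_i$ (from $g+\rho_j<y_i^*\le g+\rho_i$), a contradiction. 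The induction step is then short. If $z_i^*=z_{i-1}^*$ we are done, since $z$ is nondecreasing in \fixed. Otherwise $r_i^*=\rho_i$, $z_i^*=y_i^*+\rho_i$ with $y_i^*-\rho_i\le z_{i-1}^*$, and (since $L_i(y^*,r^*)\ge t$) $y_i^*\in[s(i),e(i)]$ and $t\rho_i^\alpha\le b_i$. If $z_{i-1}\ge e(i)+\rho_i$, the algorithm powers $i$ off but already $z_i=z_{i-1}\ge e(i)+\rho_i\ge y_i^*+\rho_i=z_i^*$. Otherwise $z_{i-1}\ge z_{i-1}^*\ge y_i^*-\rho_i\ge s(i)-\rho_i$, so the power-on test passes and $z_i=\min\set{z_{i-1}+2\rho_i,\,e(i)+\rho_i}\ge\min\set{(y_i^*-\rho_i)+2\rho_i,\,y_i^*+\rho_i}=z_i^*$. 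Hence $z_n\ge z_n^*\ge 1$ and \fixed returns YES.
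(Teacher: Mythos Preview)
Your proof is correct and follows essentially the same two-part structure as the paper's: soundness via the coverage invariant together with the containment $[y_k-r_k,y_k+r_k]\subseteq[y_i-\rho_i,y_i+\rho_i]$ for pushed-back sensors $k\in S$, and completeness via the inductive comparison $z_i\ge z'_i$ against an arbitrary ordered feasible solution. Your explicit ``no bridging'' lemma is a welcome addition---the paper's terser step ``$y'_i\le y_i$ and therefore $z'_i\le z_i$'' tacitly relies on precisely this fact.
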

\begin{proof}
If $u(i)<l(i)$ for some $i$, then no deployment that satisfies the
order $\prec$ exists by Observation~\ref{obs:lu}.  Hence, the
algorithm responds correctly.

We show that if the algorithm outputs YES, then the computed solution
is feasible.  First, notice that $y_{i-1} \leq y_i$, for every $i$, by
construction.  We prove by induction on $i$, that $y_j
\in [l(j),u(j)]$ and that $y_j \in [s(j),e(j)]$, if $r_j = \rho_j$,
for every $j \leq i$.  
%
Consider the $i$th iteration.  If $t \rho_i^\alpha > b_i$ or $z
\not\in [s(i)-\rho_i,e(i)+\rho_i)$, then $y_i \in [l(i),u(i)]$, since
$\max\set{l(i),y_{i-1}} \leq \max\set{u(i),u(i-1)} \leq u(i)$.
Otherwise, $y_i = \min\set{z+\rho_i,e(i)} \geq s(i)$, since $z \geq
s(i) - \rho_i$.  Hence, if $r_i = \rho_i$, we have that $y_i \in
[s(i),e(i)]$.  Furthermore, if $j < i$ is moved to the left to $i$,
then $y_j = y_i \geq s(i) \geq l(i) \geq l(j)$.
Finally, let $z_i$ denote the value of $z$ after the $i$th iteration.
(Initially, $z_0 = 0$.)  We proof by induction on $i$ that $[0,z_i]$
is covered.  Consider iteration $i$.  If $r_i=0$, then we are done.
Otherwise, $z_{i-1} \in [y_i-\rho_i,y_i+\rho_i]$ and $z_i =
y_i+\rho_i$.  Furthermore, the sensors in $S$ can be powered down and
moved, since $[y_j-r_j,y_j+r_j] \subseteq [y_i-\rho_i,y_i+\rho_i]$, for
every $j \in S$.

Finally, we show that if the algorithm outputs NO, there is no
feasible solution.
We prove by induction that $[0,z_i$] is the longest interval than can
be covered by sensors $1,\ldots,i$.
In the base case, observe that $z_0=0$ is optimal.  
For the induction step, let $y'$ be a deployment of $1,\ldots,i$ that
covers the interval $[0,z'_i]$.  Let $[0,z'_{i-1}]$ be the interval
that $y'$ covers by $1,\ldots,i-1$.  By the inductive hypothesis,
$z'_{i-1} \leq z_{i-1}$.
If $t \rho_i^\alpha > b_i$ or $z_{i-1} < s(i) - \rho_i$, it follows
that $z'_i = z'_{i-1} \leq z_{i-1} = z_i$.  Otherwise, observe that
$y'_i \leq y_i$ and therefore $z'_i \leq z_i$.
\qed
\end{proof}


\subsection{Variable Radii}

We present an algorithm that solves the constrained \bcvr decision
problem.


Before presenting our algorithm, we need a few definitions.
Given a \bcvr instance $(x,b)$ and $t>0$, if sensor $i$ moves from
$x_i$ to $p \in [l(i),u(i)]$, then we may assume without loss of
generality that its radius is $r_i(p,t) = \sqrt[\alpha]{(b_i -
a|p-x_i|)/t}$.

%
\begin{wrapfigure}{r}{2.15in}  
\centering
\begin{tikzpicture}[scale=3.5,fill=lightgray]
  \draw[->] (-0.75,0) -- (0.75,0) node[below]{$d$};
  \draw (0.5,-0.02) -- (0.5,0.02) node[above] {$\frac{b_i}{a}$};
  \draw (0.375,-0.02) node[below] {$d^t_i$} -- (0.375,0.02);
  \draw (-0.5,-0.02) node[below] {$-\frac{b_i}{a}$} -- (-0.5,0.02);
  \draw (-0.375,-0.02) -- (-0.375,0.02) node[above] {$-d^t_i$};
  \draw[->] (0,-0.75) -- (0,0.75); 
  \draw (-0.02,0.6125) node[left] {$g^t_i(d^t_i)$} -- (0.02,0.6125);
  \draw (-0.02,0.5) node[left] {$\sqrt[\alpha]{b_i/t}$} -- (0.02,0.5);
  \draw (-0.02,-0.6125) -- (0.02,-0.6125) node[right] {$h^t_i(-d^t_i)$};
  \draw (-0.02,-0.5) -- (0.02,-0.5) node[right] {$-\sqrt[\alpha]{b_i/t}$};
  \draw[blue,thick,domain=-0.5:0.5] 
    plot (\x, { \x + sqrt( (1 - 2*abs(\x))/4 ) } );
  \draw[red,thick,domain=-0.5:0.5] 
    plot (\x, { \x - sqrt( (1 - 2*abs(\x))/4 ) } );
  \draw[dashed,domain=-0.5:0.5] plot (\x, \x );
  \draw[dotted,<->] (0.2,0.59) -- (0.2,-0.19);
\end{tikzpicture}
\vspace{-5pt}
\label{fig:functions}
\caption{Depiction of the functions $g_i^t(d)$ and $h_i^t(d)$ for 
$a=2$, $\alpha=2$, $b_i=1$, and $t=4$.  The top (blue) curve
corresponds to $g_i^t(d)$, and the bottom (red) curve corresponds to
$h_i^t(d)$.  The dashed line corresponds to the location of sensor
$i$, while the vertical interval between the curves is the interval
that is covered by $i$ at distance $d$ from $x_i$.}
\vspace{-20pt}
\end{wrapfigure}
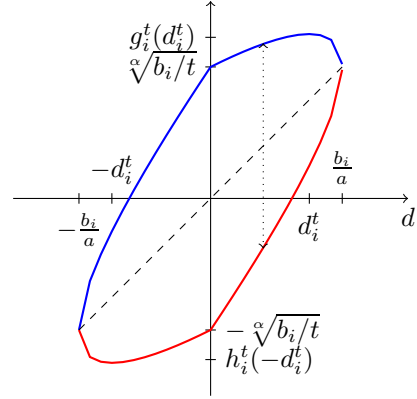
Similarly to Algorithm~\fixed, our algorithm tries to cover $[0,1]$ by
deploying sensors one by one, such that the length of the covered
prefix $[0,z]$ is maximized.  This motivates the following
definitions.
Let $d \in [-\frac{b_i}{a},\frac{b_i}{a}]$ denote the distance
traveled by sensor $i$, where $d >0$ means traveling right, and $d<0$ means
traveling left.  If a sensor travels a distance $d$, then its lifetime $t$
sustaining radius is given by $\sqrt[\alpha]{(b_i-a|d|)/t}$.  Given
$t$, we define:
\[
g^t_i(d) \eqdf d + \sqrt[\alpha]{(b_i-a|d|)/t}
~.
\]
$g^t_i(d)$ is the \emph{right reach} of sensor $i$ at distance $d$
from $x_i$, i.e., the rightmost point that $i$ covers when it has
traveled a distance of $d$ and the required lifetime is $t$.
Similarly define $h^t_i(d) \eqdf g^t_i(-d)$ is the \emph{left reach}
of sensor $i$ at distance $d$ from $x_i$.  See depiction in
Figure~\ref{fig:functions}.
%
%
We explore these functions in the next lemma whose proof is given in
the appendix.

\begin{lemma}
\label{lemma:maxreach}
Let $t>0$.  For any 
$i$, the distance $d^t_i$ maximizes $g^t_i(d)$, where
\begin{align*}
d^t_i & = 
\begin{cases}
\frac{b_i}{a} - \inv{\alpha} \sqrt[\alpha-1]{\frac{a}{\alpha t}}
              & \alpha>1 \\
\frac{b_i}{a} & \alpha=1, a < t \\
0            & \alpha=1, a \geq t
\end{cases}
&
g^t_i(d^t_i) & = 
\begin{cases}
\frac{b_i}{a} +\paren{1- \inv{\alpha}}
\sqrt[\alpha-1]{\frac{a}{\alpha t}}    & \alpha>1 \\
\frac{b_i}{\min\set{a,t}}              & \alpha=1
\end{cases}
\end{align*}
%
If $\alpha>1$ or $a \neq t$, $g^t_i$ is increasing for
$d<d^t_i$, and decreasing for $d>d^t_i$.
If $\alpha=1$ and $a=t$, $g^t_i$ is constant, for $d \geq 0$, and it
is increasing for $d<0$.
\end{lemma}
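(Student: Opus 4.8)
The plan is to treat Lemma~\ref{lemma:maxreach} as an exercise in single‑variable calculus over the domain $d \in [-\frac{b_i}{a}, \frac{b_i}{a}]$, on which $g^t_i(d) = d + \sqrt[\alpha]{(b_i - a|d|)/t}$ is defined. Because of the absolute value I would analyze the two halves $d \le 0$ and $d \ge 0$ separately and then glue them. On the left half, $g^t_i(d) = d + \sqrt[\alpha]{(b_i + ad)/t}$ is a sum of two nondecreasing functions of $d$, the second strictly increasing, so $g^t_i$ is strictly increasing on $[-\frac{b_i}{a},0]$; in particular the global maximizer lies in $[0,\frac{b_i}{a}]$ and the whole problem reduces to that interval, where $g^t_i(d) = d + \sqrt[\alpha]{(b_i - ad)/t}$. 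The last two assertions of the lemma (increasing/constant behavior for $d<0$) are already covered by this step.

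On $[0,\frac{b_i}{a}]$ I would differentiate: the derivative of $g^t_i$ with respect to $d$ equals $1 - \frac{a}{\alpha t}\paren{(b_i - ad)/t}^{(1-\alpha)/\alpha}$. When $\alpha = 1$ this is the constant $1 - \frac{a}{t}$, so $g^t_i$ is affine on $[0,\frac{b_i}{a}]$, and comparing the endpoint values $g^t_i(0)=b_i/t$ and $g^t_i(\frac{b_i}{a})=b_i/a$ gives the maximizer and the value $b_i/\min\set{a,t}$ in the three cases $a<t$, $a>t$, $a=t$ (in the last, $g^t_i$ is constant for $d \ge 0$). When $\alpha > 1$ the exponent $(1-\alpha)/\alpha$ is negative, so $\paren{(b_i-ad)/t}^{(1-\alpha)/\alpha}$ is strictly increasing in $d$; hence the derivative is strictly decreasing, $g^t_i$ is strictly concave on $[0,\frac{b_i}{a}]$, and it has a unique stationary point. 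Setting the derivative to $0$, raising to the power $\frac{\alpha}{\alpha-1}$, and isolating $d$ yields exactly $d^t_i = \frac{b_i}{a} - \inv{\alpha}\sqrt[\alpha-1]{\frac{a}{\alpha t}}$; substituting back gives $b_i - a d^t_i = \frac{a}{\alpha}\sqrt[\alpha-1]{\frac{a}{\alpha t}}$, hence $\sqrt[\alpha]{(b_i - a d^t_i)/t} = \sqrt[\alpha-1]{\frac{a}{\alpha t}}$ and $g^t_i(d^t_i) = \frac{b_i}{a} + \paren{1 - \inv{\alpha}}\sqrt[\alpha-1]{\frac{a}{\alpha t}}$, as claimed. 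Concavity then gives that $g^t_i$ increases on $[-\frac{b_i}{a}, d^t_i]$ and decreases on $[d^t_i, \frac{b_i}{a}]$.

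These computations are routine but need care with the fractional exponents; the one point that genuinely requires justification is that the interior stationary point $d^t_i$ actually lies in $[0,\frac{b_i}{a}]$ (the upper bound is immediate, since we subtract a positive quantity from $\frac{b_i}{a}$), because this is precisely what lets us conclude the clean ``increasing then decreasing'' picture with the maximum at $d^t_i$ rather than at the endpoint $d=0$. I expect this domain check — equivalently, that $b_i$ is large enough relative to $a$, $\alpha$, $t$ — to be the main thing to pin down, after which gluing the (increasing) left half $d\le 0$ to the right half finishes the proof.
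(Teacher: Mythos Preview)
Your approach is essentially identical to the paper's: both split into $d<0$ and $d\ge 0$, observe monotonicity on the left half, differentiate on the right half, and solve for the unique stationary point (handling $\alpha=1$ as the piecewise-affine special case). The domain check you flag --- that $d^t_i \ge 0$, i.e.\ $b_i \ge \frac{a}{\alpha}\sqrt[\alpha-1]{a/(\alpha t)}$ --- is not verified in the paper's proof either, so your caution there is well placed rather than a divergence from the intended argument.
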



Given a point $z \in [0,1]$, the \emph{attaching position} of sensor
$i$ to $z$, denoted by $p_i(z,t)$, is the position $p$ for which
$p-r_i(p,t)=z$ such that $p+r_i(p,t)$ is maximized, if such a position
exist.  If such a point does not exist we define $p_i(z,t) = \infty$.
Observe that by Lemma~\ref{lemma:maxreach} there may be at most two
points that satisfy the equation $p-r_i(p,t)=z$.
Such a position can either be found explicitly or numerically as it
involves solving an equation of degree $\alpha$.  
We ignore calculation inaccuracies for ease of presentation.  These
inaccuracies are subsumed by the additive factor.

Algorithm~\variable is our decision algorithm for \bcvr.  It first
computes $u$ and $l$.  If there is a sensor $i$, such that $l(i) >
u(i)$, it outputs NO.  Then, it deploys the sensors one by one
according to $\prec$ with the goal of extending the coverage interval
$[0,z]$.
If $i$ cannot increase the covering interval it is placed at
$\max\{l(i),y_{i-1}\}$ so as not to block sensor $i+1$.
If $i$ can increase coverage, it is placed in $[l(u),u(i)]$ such that
$z$ is covered and coverage to the right is maximized.
It may be the case that the best place for $i$ is to the left of
previously positioned sensors.  In this case the algorithm moves the
sensors such that coverage and order are maintained.
Finally, if $z<1$ after placing sensor $n$, the algorithm outputs NO,
and otherwise it outputs YES.

\begin{algorithm}[t]
\begin{small}
\begin{algorithmic}[1]
\caption{: \variable$(x,b,t)$}
\label{Alg:decide-variable}
\State Compute $l$ and $u$
\State \textbf{if} there exists $i$ such that $u(i)<l(i)$
       \textbf{then} \Return{NO}
\State $z \gets 0$
\For{$i=1 \to n$}
   \State $q_L(i) \gets \min\set{ \max\set{x_i - d^t_i, l(i)}, u(i)}$ 
   \State $q_R(i) \gets \max\set{ \min\set{x_i + d^t_i, u(i)}, l(i)}$ 
   \If{$z \not\in [q_L(i) - r_i(q_L(i),t),q_R(i) + r_i(q_R(i),t)]$}
      \State $y_i \gets\max\set{l(i),y_{i-1}}$
             and $r_i \gets 0$
\Comment{$y_0 = 0$}
   \Else
      \State $y_i \gets \max\set{\min\set{p_i(z,t),u(i),x_i + d_i^t},l(i)}$
             and $r_i \gets r_i(y_i,t)$
      \State $S \gets \set{k : k < i, y_i < y_k}$
      \State $y_k \gets y_i$ and $r_k \gets 0$, for every $k \in S$
%
%
      \State $z \gets y_i + r_i$
   \EndIf
\EndFor
\State \textbf{if} $z<1$ \textbf{then} \Return{NO}
\State \textbf{else} \Return{YES}
\end{algorithmic}
\end{small}
\end{algorithm}

$l$ and $u$ can be computed in $O(n)$ time.  There are $n$ iterations
of the main loop, each taking $O(n)$ time (assuming that computing
$p_i(z,t)$ takes $O(1)$ time), thus the running time of the algorithm
is $O(n^2)$.

We now prove the correctness of Algorithm~\variable.
We define 
\[
P(i)
= \set{p : p \in [l(i),u(i)] \text{ and } z \in [p-r_i(p,t),p+r_i(p,t)]}
~. 
\]
$P(i)$ is the set of points from which sensor $i$ can cover $z$.
Observe that $P(i)$ is an interval due to Lemma~\ref{lemma:maxreach}.
Hence, we write $P(i) = [p_L(i),p_R(i)]$.

In the next two lemmas it is shown that when the algorithm checks
whether $z \not\in [q_L(i) - r_i(q_L(i),t), q_R(i) + r_i(q_R(i),t)]$
it actually checks whether $P(i) = \emptyset$, and that $y^*_i \eqdf
\max \set{\min \set{p_i(z,t),u(i),x_i + d_i^t},l(i)}$ is equal to
$\argmax_{p \in P} \set{p+r_i(p,t)}$.  Hence, in each iteration we
check whether $[0,z]$ can be extended, and if it can, we take the best
possible extension.

\begin{lemma}
\label{lemma:useless}
$[p_L(i),p_R(i)] \subseteq [q_L(i),q_R(i)]$.  Moreover, $P(i) =
\emptyset$ if and only if $z \not\in [q_L(i) - r_i(q_L(i),t), q_R(i) +
r_i(q_R(i),t)]$.
\end{lemma}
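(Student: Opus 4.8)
The plan is to reduce both claims to the unimodality of $g^t_i$ from Lemma~\ref{lemma:maxreach}. Write the right and left endpoints of the interval covered by sensor $i$ placed at a point $p$ as $x_i+g^t_i(p-x_i)$ and $x_i-h^t_i(p-x_i)$ respectively (using $r_i(p,t)=\sqrt[\alpha]{(b_i-a|p-x_i|)/t}$ and the identity $h^t_i(d)=g^t_i(-d)$). By Lemma~\ref{lemma:maxreach} the map $d\mapsto g^t_i(d)$ is unimodal with mode $d^t_i\ge 0$, so $p\mapsto x_i+g^t_i(p-x_i)$ is unimodal with maximizer $x_i+d^t_i$, while $p\mapsto x_i-h^t_i(p-x_i)$ is ``valley-shaped'' with minimizer $x_i-d^t_i$. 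The feasible window for $y_i$ is $[l(i),u(i)]$ by Observation~\ref{obs:lu}, and it is nonempty whenever the algorithm does not immediately answer NO. Since $q_R(i)$ and $q_L(i)$ are exactly $x_i+d^t_i$ and $x_i-d^t_i$ clamped into $[l(i),u(i)]$, and restricting a single-peaked (resp.\ single-valley) function to a subinterval attains its optimum at the clamped mode, I would first record that over $[l(i),u(i)]$ the right endpoint $x_i+g^t_i(p-x_i)$ is maximized at $p=q_R(i)$ and the left endpoint $x_i-h^t_i(p-x_i)$ is minimized at $p=q_L(i)$.

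For the ``moreover'' part I would show that the set $C(i)\eqdf\bigcup_{p\in[l(i),u(i)]}[\,p-r_i(p,t),\,p+r_i(p,t)\,]$ of points sensor $i$ can cover from a feasible position equals
\[
C(i)=\big[\,q_L(i)-r_i(q_L(i),t),\ q_R(i)+r_i(q_R(i),t)\,\big];
\]
the lemma follows since $P(i)\neq\emptyset$ exactly when some feasible position covers $z$, i.e.\ exactly when $z\in C(i)$. Inclusion ``$\subseteq$'' is immediate from the first paragraph, as every feasible $p$ has left endpoint $\ge q_L(i)-r_i(q_L(i),t)$ and right endpoint $\le q_R(i)+r_i(q_R(i),t)$. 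For ``$\supseteq$'' I would invoke connectedness: each interval $[\,p-r_i(p,t),\,p+r_i(p,t)\,]$ contains its centre $p$, the centres range over the connected set $[l(i),u(i)]$, and both endpoints depend continuously on $p$; an intermediate-value argument then forces $C(i)$ to be an interval, hence the closed interval spanned by its (attained) extreme values.

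For the containment $[p_L(i),p_R(i)]\subseteq[q_L(i),q_R(i)]$ the point is that positions strictly outside $[q_L(i),q_R(i)]$ are redundant. If $p\in P(i)$ with $p<q_L(i)$, then $p$ lies to the left of both $x_i-d^t_i$ (the minimizer of the left endpoint) and $x_i+d^t_i$ (the maximizer of the right endpoint), so sliding $p$ rightward to $q_L(i)$ only lowers the left endpoint (keeping it $\le z$) and only raises the right endpoint (keeping it $\ge z$ and not shortening the extension of $[0,z]$); hence $q_L(i)\in P(i)$, its covered interval contains that of $p$, and $\argmax_{p\in P(i)}\{p+r_i(p,t)\}$ is attained at or to the right of $q_L(i)$ --- which is the form of the containment that the subsequent argument uses. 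The case $p>q_R(i)$ is handled symmetrically through $q_R(i)$. I expect the main obstacle to be the clamping bookkeeping here and in the first paragraph: one must treat separately the sub-cases where the clamp binds at $l(i)$ or at $u(i)$, and the flat case $\alpha=1,\ a=t$ where $g^t_i$ is constant on $[0,b_i/a]$; but on the side that matters the two monotonicities above always persist, so the argument remains uniform.
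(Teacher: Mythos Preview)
Your argument is correct and follows the paper's approach: both reduce to Lemma~\ref{lemma:maxreach} to identify $q_L(i)$ and $q_R(i)$ as the positions in $[l(i),u(i)]$ extremizing left and right reach, from which the ``moreover'' clause follows directly. The paper's proof is a two-line invocation of that fact, while you spell out the connectedness argument and---astutely---flag that the literal containment $P(i)\subseteq[q_L(i),q_R(i)]$ need not hold in general (a point $p<q_L(i)$ can still lie in $P(i)$, since its covered interval is merely a subset of the one at $q_L(i)$), though the dominated-position form you establish is precisely what Lemma~\ref{lemma:usefull} actually uses.
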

\begin{proof}
By Lemma~\ref{lemma:maxreach} $q_L(i)$ is the location that maximized
coverage to the left, and $q_R(i)$ is the location that maximized
coverage to the right.
\qed
\end{proof}

\begin{lemma}
\label{lemma:usefull}
If $P(i) \neq \emptyset$, then $y^*_i = \argmax_{p \in P(i)}
\set{p+r_i(p,t)}$.
\end{lemma}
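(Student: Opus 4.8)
The plan is to treat $f(p)\eqdf p+r_i(p,t)=x_i+g^t_i(p-x_i)$ as the \emph{right reach} of sensor $i$ placed at $p$ and to exploit its unimodality. First I would record from Lemma~\ref{lemma:maxreach} that on $[l(i),u(i)]\subseteq[x_i-b_i/a,\,x_i+b_i/a]$ the function $f$ is increasing for $p<p^*$ and decreasing for $p>p^*$, where $p^*\eqdf x_i+d^t_i$ (in the degenerate case $\alpha=1$, $a=t$, $f$ is constant to the right of $x_i$; then take $p^*\eqdf x_i$, the leftmost maximizer, and the argument below is unchanged). Since $P(i)=[p_L(i),p_R(i)]$ is an interval (and nonempty by hypothesis), maximizing a unimodal function over it is the same as clamping the peak to the interval, so
\[
\argmax_{p\in P(i)}\set{p+r_i(p,t)}=\max\set{p_L(i),\,\min\set{p^*,p_R(i)}}
\]
(it equals $p^*$ if $p^*\in P(i)$; it equals $p_R(i)$ if $p^*$ lies to the right of $P(i)$, where $f$ is increasing on all of $P(i)$; and it equals $p_L(i)$ if $p^*$ lies to the left). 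So the task reduces to matching this right-hand side with $y^*_i=\max\set{l(i),\,\min\set{p^*,u(i),p_i(z,t)}}$.

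The crucial step is to show $\min\set{p^*,p_R(i)}=\min\set{p^*,u(i),p_i(z,t)}$. Since $p\in P(i)$ iff $p-r_i(p,t)\le z\le p+r_i(p,t)$, the endpoint $p_R(i)$ is the smallest of three quantities: $u(i)$; the largest $p$ with $p+r_i(p,t)=z$; and the largest $p$ with $p-r_i(p,t)=z$. The first such root lies on the decreasing branch of $f$, hence is $\ge p^*$, so it is invisible inside $\min\set{p^*,\cdot}$. For the second root, the clean observation is that if $q_1<q_2$ both satisfy $q-r_i(q,t)=z$, then subtracting the equations gives $q_2-q_1=r_i(q_2,t)-r_i(q_1,t)$, whence $\paren{q_2+r_i(q_2,t)}-\paren{q_1+r_i(q_1,t)}=2(q_2-q_1)>0$; thus the larger root maximizes the right reach among solutions of $q-r_i(q,t)=z$, i.e.\ it is exactly $p_i(z,t)$, and moreover $f(p_i(z,t))=z+2r_i(p_i(z,t),t)\ge z$ shows $p_i(z,t)$ does not exceed the first root, so the first root may be discarded from the minimum. (If $p-r_i(p,t)=z$ has no solution in range, then $p_i(z,t)=\infty$, the left-reach constraint never binds, and the identity still holds; the faithfulness of the emptiness test is Lemma~\ref{lemma:useless}, which may be quoted here.) Combining, $\argmax_{p\in P(i)}\set{p+r_i(p,t)}=\max\set{p_L(i),\,\min\set{p^*,u(i),p_i(z,t)}}$, and in particular $p_R(i)\le p_i(z,t)$ and $p_R(i)\le u(i)$.

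Finally I would close the gap between $p_L(i)$ and $l(i)$ in the outer maximum. As $p_L(i)\ge l(i)$ always, it suffices to prove that $p_L(i)>l(i)$ forces $\min\set{p^*,u(i),p_i(z,t)}\ge p_L(i)$. If $p_L(i)>l(i)$, then $l(i)$, and hence every point of $[l(i),p_L(i))$, violates one of the two reach constraints. If it is the right-reach constraint ($f<z$ just left of $p_L(i)$), then $p_L(i)$ lies on the increasing branch of $f$, so $p_L(i)\le p^*$; together with $p_L(i)\le p_R(i)\le p_i(z,t)$ and $p_L(i)\le u(i)$ this gives the claim. If it is the left-reach constraint ($p-r_i(p,t)>z$ just left of $p_L(i)$, with the left reach decreasing there), then $p_L(i)$ is the \emph{smaller} root of $q-r_i(q,t)=z$; by Lemma~\ref{lemma:maxreach} the left reach $p\mapsto p-r_i(p,t)$ is minimized at $x_i-d^t_i\le x_i+d^t_i=p^*$, so this root is $\le p^*$, and it is also $\le p_i(z,t)$ (the larger root) and $\le u(i)$. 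Either way $\min\set{p^*,u(i),p_i(z,t)}\ge p_L(i)$, the two maxima coincide, and $y^*_i=\argmax_{p\in P(i)}\set{p+r_i(p,t)}$. The step I expect to need the most care is this last one: deciding which reach constraint binds at $p_L(i)$ (resp.\ $p_R(i)$) and then tracking the order relations among $p^*$, the left-reach minimizer $x_i-d^t_i$, and the two roots of $q-r_i(q,t)=z$; the rest is bookkeeping on the clamping identities.
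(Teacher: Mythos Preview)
Your proposal is correct and follows essentially the same approach as the paper: both arguments rest on the unimodality of $p\mapsto p+r_i(p,t)$ from Lemma~\ref{lemma:maxreach} and then locate the maximizer by comparing $p^*=x_i+d^t_i$ with the endpoints of $P(i)$. The paper presents this as a bare three-case split on whether $p^*\in P(i)$, $p^*>p_R(i)$, or $p^*<p_L(i)$, asserting in each case the relevant identity (e.g.\ $p_R(i)=\min\{p_i(z,t),u(i)\}$ or $p_L(i)=l(i)$) with little justification; your clamping formulation $\argmax=\max\{p_L(i),\min\{p^*,p_R(i)\}\}$ together with the explicit identification $\min\{p^*,p_R(i)\}=\min\{p^*,u(i),p_i(z,t)\}$ is the same content but with the endpoint identifications actually argued (your observation that the two roots of $q-r_i(q,t)=z$ differ in right reach by exactly $2(q_2-q_1)$, hence $p_i(z,t)$ is the larger root, is a clean way to pin this down that the paper leaves implicit). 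The final step you flag as delicate---showing that when $p_L(i)>l(i)$ the inner minimum already dominates $p_L(i)$---is precisely what underlies the paper's terse claim that $p_L(i)=l(i)$ in its third case, and your two-subcase treatment (which of the reach constraints binds at $p_L(i)$) makes that explicit; note that your use of $x_i-d^t_i\le x_i+d^t_i$ tacitly assumes $d^t_i\ge 0$, which the paper also relies on without comment.
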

\begin{proof}
By Lemma~\ref{lemma:maxreach}, there are three cases:
\begin{itemize}
\item If $x_i + d^t_i \in P(i)$, then 
      $\argmax_{p \in P(i)} \set{p+r_i(p,t)} = x_i + d^t_i$.

      $y^*_i = x_i + d_i^t$, since $p_i(z,t) \geq x_i + d^t_i$.

\item If $x_i + d^t_i > p_R(i)$, then 
      $\argmax_{p \in P(i)} \set{p+r_i(p,t)} = p_R(i)$.  

      $y^*_i = \min\set{p_i(z,t),u(i)}$, since $p_R(i) =
      \min\set{p_i(z,t),u(i)} \geq l(i)$.

\item If $x_i + d^t_i < p_L(i)$, then 
      $\argmax_{p \in P(i)} \set{p+r_i(p,t)} = p_L(i)$.  

      $y^*_i = l(i)$, since $q_L(i) = l(i) > x_i + d_i^t \geq
      \min\set{p_i(z,t),u(i),x_i + d_i^t}$. \qed
\end{itemize}
\end{proof}

We are now ready to prove the correctness of our algorithm.

\begin{theorem}
\label{thm:search-bcvr}
\variable solves the constrained \bcvr decision problem.
\end{theorem}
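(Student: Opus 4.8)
The plan is to mirror the structure of the proof of Theorem~\ref{thm:search-bcfr}, replacing the role of the fixed-radius bookkeeping with the reachability analysis provided by Lemma~\ref{lemma:maxreach}, Lemma~\ref{lemma:useless}, and Lemma~\ref{lemma:usefull}. First I would handle the trivial case: if $u(i) < l(i)$ for some $i$, then by Observation~\ref{obs:lu} no deployment satisfying $\prec$ exists, so outputting NO is correct. From now on assume $l(i) \le u(i)$ for all $i$.

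Next I would show that if the algorithm outputs YES then the pair $(y,r)$ it produces is feasible with lifetime at least $t$. The two things to check are the reachability constraints $a|y_i - x_i| \le b_i$ (more precisely $y_i \in [l(i),u(i)]$) and that $[0,1] \subseteq \bigcup_i [y_i - r_i, y_i + r_i]$. For reachability: by construction $y_{i-1} \le y_i$ for all $i$, and I would argue by induction on $i$ that $y_i \in [l(i),u(i)]$, exactly as in Theorem~\ref{thm:search-bcfr} --- in the ``cannot extend'' branch $y_i = \max\{l(i),y_{i-1}\} \le \max\{u(i),u(i-1)\} \le u(i)$, and in the ``can extend'' branch $y_i$ is defined as a value clamped into $[l(i), u(i)]$ via the outer $\max\{\cdot, l(i)\}$ and the $\min\{\cdot, u(i)\}$ inside; also any $j \in S$ gets moved to $y_j = y_i$, which lies in $[l(j), u(j)]$ since $l(j) \le l(i) \le y_i \le u(i) \le u(j)$ by monotonicity of $l$ and $u$ along the order. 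The radius $r_i = r_i(y_i,t) = \sqrt[\alpha]{(b_i - a|y_i - x_i|)/t}$ is then exactly the lifetime-$t$ sustaining radius, so $L_i(y,r) = t$; sensors with $r_i = 0$ have $L_i = \infty$. For coverage, I would let $z_i$ be the value of $z$ after iteration $i$ and prove by induction that $[0,z_i]$ is covered by sensors $1,\dots,i$: in the ``cannot extend'' branch $z$ is unchanged and powering $i$ down does not harm the already-covered prefix; in the ``can extend'' branch the guard $z_{i-1} \in [q_L(i) - r_i(q_L(i),t), q_R(i) + r_i(q_R(i),t)]$ together with Lemma~\ref{lemma:useless} gives $P(i) \neq \emptyset$, hence $y^*_i \in P(i)$, i.e.\ $z_{i-1} \in [y_i - r_i, y_i + r_i]$, so $[0, z_{i-1}] \cup [y_i - r_i, y_i + r_i] = [0, z_i]$ with $z_i = y_i + r_i$; and every $k \in S$ satisfies $[y_k - r_k, y_k + r_k] \subseteq [y_i - r_i, y_i + r_i]$ because $y_k \le y_i$ and $k$ was attached to a point $\le z_{k}\le z_{i-1} = y_i - r_i$, so moving and powering down those sensors preserves coverage of $[0,z_i]$. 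If the algorithm outputs YES then $z_n \ge 1$, so $[0,1]$ is covered.

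The core of the argument, and the part I expect to be the main obstacle, is the optimality claim needed when the algorithm outputs NO: namely that $z_i$ is the length of the longest prefix $[0, z_i]$ coverable by sensors $1, \dots, i$ subject to $\prec$ and lifetime $t$. The base case $z_0 = 0$ is immediate. For the step, let $y'$ be any feasible order-respecting deployment of $1,\dots,i$ with lifetime $\ge t$ covering $[0, z'_i]$, and let $[0, z'_{i-1}]$ be the prefix it covers using only $1, \dots, i-1$; by the inductive hypothesis $z'_{i-1} \le z_{i-1}$. If the algorithm took the ``cannot extend'' branch in iteration $i$, then by Lemma~\ref{lemma:useless} sensor $i$ cannot, from any point in $[l(i), u(i)]$, cover the point $z_{i-1}$; since $z'_{i-1} \le z_{i-1}$ I must argue $i$ cannot help extend past $z'_{i-1}$ either, which follows because the set of points $i$ can cover from a feasible position is an interval $[q_L(i) - r_i(q_L(i),t), q_R(i) + r_i(q_R(i),t)]$ (Lemma~\ref{lemma:maxreach}/Lemma~\ref{lemma:useless}) whose right endpoint is $< z_{i-1}$ whenever it fails to contain $z_{i-1}$ while being attachable to the left --- so $i$'s contribution, if any, is a subinterval of $[0, z_{i-1}]$ and $z'_i = z'_{i-1} \le z_{i-1} = z_i$. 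Actually the cleaner statement is: if $P(i) = \emptyset$ with $z = z_{i-1}$, then also $P(i) = \emptyset$ with $z = z'_{i-1}$ unless the whole reach interval of $i$ lies within $[0, z_{i-1}]$, and in either case $z'_i \le z_{i-1}$. If instead the algorithm took the ``can extend'' branch, then $y'_i$ is a feasible position for $i$; I would show $z'_i \le \max\{z'_{i-1}, y'_i + r'_i\}$ and that whenever $y'_i + r'_i > z'_{i-1}$ we must have $y'_i \in P(i)$ relative to $z'_{i-1} \le z_{i-1}$, so by Lemma~\ref{lemma:usefull} $y'_i + r'_i \le y^*_i + r_i(y^*_i, t) = z_i$ (here I use that $P(i)$ only grows as the attachment point $z$ moves left, so $P(i)$ w.r.t.\ $z'_{i-1}$ contains $P(i)$ w.r.t.\ $z_{i-1}$, and the maximizer of $p + r_i(p,t)$ over the larger interval is still at most $y^*_i + r_i(y^*_i,t)$ by the unimodality in Lemma~\ref{lemma:maxreach}). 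Combining, $z'_i \le z_i$. Hence $z_n$ is the longest coverable prefix, so if $z_n < 1$ no feasible lifetime-$t$ solution exists and NO is correct. \qed
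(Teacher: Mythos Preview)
Your overall structure matches the paper's proof exactly: handle $u(i)<l(i)$ via Observation~\ref{obs:lu}, verify feasibility in the YES branch by induction on the covered prefix, and for the NO branch prove by induction that $z_i$ is the maximum length of a coverable prefix using sensors $1,\dots,i$. The YES branch is essentially the paper's argument (one minor slip: for $j\in S$ you write $y_i\le u(i)\le u(j)$, but in fact $u(j)\le u(i)$ since $u$ is a minimum over a larger index set; the correct reason $y_j=y_i\le u(j)$ is simply that $y_i$ is smaller than $j$'s \emph{previous} position, which was already $\le u(j)$).

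The genuine gap is in the inductive step of the NO branch, specifically your ``can extend'' case. You show $y'_i\in P(i)$ relative to $z'_{i-1}$ and then want to invoke Lemma~\ref{lemma:usefull}, which concerns $P(i)$ relative to $z_{i-1}$. Your bridge --- ``$P(i)$ only grows as the attachment point $z$ moves left, so $P(i)$ w.r.t.\ $z'_{i-1}$ contains $P(i)$ w.r.t.\ $z_{i-1}$'' --- is false: a point $p$ with $z'_{i-1}<p-r_i(p,t)\le z_{i-1}\le p+r_i(p,t)$ lies in $P(i,z_{i-1})$ but not in $P(i,z'_{i-1})$. And even if the containment held, the inequality would point the wrong way (a maximum over a larger set is at least, not at most, the maximum over the smaller set), so ``unimodality'' does not rescue it as stated. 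The conclusion you want, that $\max_{p\in P(i,z'_{i-1})}(p+r_i(p,t))\le \max_{p\in P(i,z_{i-1})}(p+r_i(p,t))$, does happen to be true, but it needs a different argument.

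The paper avoids this detour entirely by splitting not on the algorithm's branch but on whether $z'_i\le z_{i-1}$. If $z'_i\le z_{i-1}$ there is nothing to prove (and this subsumes your ``cannot extend'' analysis). If $z'_i>z_{i-1}$, then sensor $i$ must cover the gap $(z'_{i-1},z'_i]$, so $y'_i-r_i(y'_i,t)\le z'_{i-1}\le z_{i-1}$ and $y'_i+r_i(y'_i,t)\ge z'_i>z_{i-1}$; hence $y'_i\in P(i)$ \emph{relative to $z_{i-1}$} directly, the algorithm is in the ``can extend'' branch by Lemma~\ref{lemma:useless}, and Lemma~\ref{lemma:usefull} gives $z'_i\le y'_i+r_i(y'_i,t)\le y^*_i+r_i(y^*_i,t)=z_i$. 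Replacing your monotonicity claim with this one-line observation closes the gap.
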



\subsection{Parametric Search Algorithms}

We design parametric search algorithms for constrained \bcfr and \bcvr.

Since we have an algorithm that, given $t$ and an order $\prec$,
decides whether there exists a solution that satisfies $\prec$ with
lifetime $t$, we can perform a binary search on $t$.
The maximum lifetime of a given instance is bounded by the lifetime of
this instance in the case where $a=0$.  In Appendix~\ref{sec:friction}
we show that the lifetime in the fixed case is at most $\max_i
\set{b_i/\rho_i^\alpha}$ and that in the variable radii case it is at most
$(2\sum_j \sqrt[\alpha]{b_j})^\alpha$.
Hence, the running time of the parametric search in polynomial in the
input size and in the $\log \inv{\eps}$, where $\eps$ is the accuracy
parameter.


\section{Sensors are Located on the Edges of the Barrier}
\label{sec:zero}

In this section we consider the special case in which the initial
locations are on either edge of the barrier, namely the case where $x
\in \set{0,1}^n$.
For both \bcvr and \bcfr we show that, given an achievable lifetime
$t$, there exists a solution with lifetime $t$ in which the sensors
satisfy a certain ordering.  
In the case of \bcvr, the ordering depends only on the battery sizes,
and hence we may use the parametric search algorithm for constrained
\bcvr from Section~\ref{sec:search} to solve this special case of \bcvr.
In the case of \bcfr, the ordering depends on $t$, and therefore may
change.  Even so, we may use parametric search for this special case of
\bcfr since, given $t$, the ordering can be computed in polynomial
time.


\paragraph*{\bf Fixed radii.}
We start by considering the special case of \bcfr in which all sensors
are located at $x=0$.  (The case where $x=1$ is symmetric.)

Given a \bcfr instance $(0,b,\rho)$ and a lifetime $t$, the
\emph{maximum reach} of sensor $i$ is defined as the farthest point
from its initial position that sensors $i$ can cover while maintaining
lifetime $t$, and is given by: $f_t(i) = (b_i - t \rho_i^\alpha)/a +
\rho_i$, if $t \rho_i^\alpha \leq b_i$, and $f_t(i) = 0$, otherwise.
We assume without loss of generality in the following that the sensors
are ordered according to \emph{reach ordering}, namely that $i<j$ if
and only if $f_t(i) < f_t(j)$.  Also, we ignore sensors with zero
reach, since they must power down.  Hence, if $f_t(i)=0$, we place $i$
at 0 and set its radius to 0.
Let $t$ be an achievable lifetime, we show that there exists a
solution $(y,r)$ with lifetime $t$ such that sensors are deployed
according to reach ordering.

\begin{lemma}
\label{lemma:zero-bcfr}
Let $(0,b,\rho)$ be a \bcfr instance and let $p \in (0,1]$.  Suppose
that there exists a solution that covers $[0,p]$ for $t$ time.  Then,
there exists a solution that covers $[0,p]$ lifetime for $t$ time that
satisfies reach ordering.
\end{lemma}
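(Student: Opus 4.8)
The plan is to prove the lemma by an exchange argument on an extremal covering solution. First record the basic geometry: given the target lifetime $t$, a placement of sensor $i$ with center $c$ and radius $\rho_i$ is feasible if and only if $0\le c\le(b_i-t\rho_i^\alpha)/a$, so $f_t(i)$ is exactly the rightmost point sensor $i$ can cover; equivalently, sensor $i$ can cover a point $x\ge0$ (with lifetime $t$) iff $f_t(i)\ge x$. Among all feasible solutions that cover $[0,p]$ with lifetime $t$, take one minimizing the number of \emph{inversions} of active (non‑powered‑down) sensors, i.e. pairs $i,j$ of active sensors with $y_i<y_j$ but $f_t(i)>f_t(j)$ (breaking ties in reach consistently, say by index). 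If there are no inversions, the active sensors already occur in reach order, and it only remains to slide the inactive sensors into positions consistent with reach order — the reach‑$0$ ones to the origin, and each remaining inactive sensor $i$ into its slot between neighbouring active sensors, which is within its budget since $i$ can travel up to $b_i/a\ge f_t(i)-\rho_i$ — which yields the solution claimed by the lemma.

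So suppose an inversion exists. Passing to an innermost one (if an active sensor $k$ has center strictly between $y_i$ and $y_j$, then $(i,k)$ or $(k,j)$ is again an inversion), there are two \emph{consecutive} active sensors $L$ and $R$, with $y_L<y_R$, $f_t(L)>f_t(R)$, and no active sensor whose center lies strictly between $y_L$ and $y_R$. The goal is to reposition $L$ and $R$ only, leaving everyone else fixed, so that $[0,p]$ is still covered and $R$ no longer lies to the right of $L$. This strictly decreases the inversion count — any new inversion would have to involve $L$ or $R$, and one can arrange (e.g. by choosing among all inversions one that maximizes $f_t(L)$, so that $L$ is the highest‑reach misplaced sensor) that none is created — contradicting extremality.

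The repositioning is the crux, and the main obstacle, of the proof. Because $L$ and $R$ are consecutive active sensors, their only job is to cover whatever part $G$ of $[0,p]$ the other active sensors leave uncovered, and $G\subseteq I_L\cup I_R$ (if $G=\varnothing$, simply re‑slot both $L$ and $R$ into reach order and we are done). After disposing of the easy situation in which $I_L\subseteq I_R$ or $I_R\subseteq I_L$ (one of the two is redundant for $G$ and is moved into its reach‑order slot), we may assume $I_L$ and $I_R$ genuinely overlap, with endpoint pattern $y_L-\rho_L\le y_R-\rho_R\le y_L+\rho_L\le y_R+\rho_R$. Writing $\ell=\inf G$ and $r=\sup G$, this gives $y_L-\rho_L\le\ell\le r\le y_R+\rho_R\le f_t(R)\le f_t(L)$ and $r-\ell\le(y_R-y_L)+\rho_L+\rho_R\le2(\rho_L+\rho_R)$, and it suffices to re‑cover the interval $[\ell,r]\supseteq G$. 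Now distinguish two cases. If $r-\ell\le2\max\{\rho_L,\rho_R\}$, the wider of the two sensors covers $[\ell,r]$ by itself (it can reach $r$, since its reach is $\ge r$) and the narrower one is re‑slotted into reach order. Otherwise $r-\ell>2\max\{\rho_L,\rho_R\}\ge\rho_L+\rho_R$, and we place $R$ to cover $[\ell,\ell+2\rho_R]$ (feasible, since $\ell+2\rho_R<r\le f_t(R)$) and $L$ to cover the remaining $[\ell+2\rho_R,r]$, whose length $r-\ell-2\rho_R\le2\rho_L$ and whose right endpoint $r\le f_t(L)$ make this possible; moreover $r-\ell>\rho_L+\rho_R$ forces the leftmost admissible center $r-\rho_L$ of $L$ to exceed the center $\ell+\rho_R$ of $R$, so $R$ indeed ends up to the left of $L$. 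A short check that the new centers of $L$ and $R$ stay inside $[y_L-\rho_L,y_R+\rho_R]$ then confirms no new inversion with the untouched sensors. The essential point — and the reason a naive swap of $L$ and $R$ fails when their radii differ — is that one must exploit $L$'s larger reach to let it take the rightmost sub‑segment $[\ell+2\rho_R,r]$ while $R$ is pressed against the prefix already covered by the other sensors.
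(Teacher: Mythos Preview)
Your approach is the same exchange argument the paper uses: take a solution minimizing the number of order violations, locate an adjacent violating pair, and swap. Two differences in execution are worth noting.

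First, the paper opens with the reduction ``set $r_i=\rho_i$ for every sensor of positive reach and send the rest to $0$,'' which removes your separate bookkeeping for active versus inactive sensors and the attendant ``slide the inactive ones into their slots'' step.

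Second, the paper's swap is a single formula rather than a case split: in your notation it places $R$ at $y_L+(\rho_R-\rho_L)$ and $L$ at $y_R+(\rho_R-\rho_L)$, which covers $[y_L-\rho_L,\,y_R+\rho_R]$ whenever the original intervals meet. This is exactly your Case~2 specialised to $\ell=y_L-\rho_L$ and $r=y_R+\rho_R$, so your Case~1 is an unnecessary refinement.

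Your sketch does leave two loose ends. You pass from ``not nested'' directly to ``overlapping'' without treating the possibility that $I_L$ and $I_R$ are disjoint; this case does reduce to the redundant case (any third active sensor covering the gap has its center outside $(y_L,y_R)$ and hence its interval contains $I_L$ or $I_R$), but that needs to be said. And your ``one can arrange \dots\ that no new inversion is created'' is precisely the part the paper spells out, by explicitly relocating every sensor that the moved pair bypasses; choosing an inversion with maximal $f_t(L)$ does not by itself control what happens on $R$'s side.
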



\paragraph*{\bf Variable radii.}
We now consider the case where $x=0$ for \bcvr.  ($x=1$ is symmetric.)

Given a \bcvr instance $(0,b)$ and a lifetime $t$, the \emph{maximum
reach} of sensor $i$ is $g^t_i(d^t_i)$.  Note that if the sensors are
ordered by battery size, namely that $i<j$ if and only if $b_i < b_j$,
they are also ordered by reach.  Thus, we assume in the following that
sensors are ordered by battery size. 
%
Let $t$ be an achievable lifetime.  We show that there exists a
deployment $y$ with lifetime $t$ such that sensors are deployed
according to the battery ordering, namely $b_i \leq b_j$ if and only
if $y_i \leq y_j$.

We need the following technical lemma.

\begin{lemma}
\label{lemma:tech}
Let $c_1,c_2,d_1,d_2 \geq 0$ such that 
\begin{inparaenum}[(i)]
\item $d_1 < c_1 \leq c_2 < d_2$, and 
\item $c_1+c_2 > d_1+d_2$.
\end{inparaenum}  
Also let $\alpha \geq 1$.  Then, $\sqrt[\alpha]{c_1} +
\sqrt[\alpha]{c_2} > \sqrt[\alpha]{d_1} + \sqrt[\alpha]{d_2}$
\end{lemma}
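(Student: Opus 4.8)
The plan is to reduce the whole statement to two elementary properties of $\phi(x) \eqdf \sqrt[\alpha]{x} = x^{1/\alpha}$ on $[0,\infty)$, valid for every $\alpha \ge 1$: $\phi$ is strictly increasing, and $\phi$ is concave (strictly so when $\alpha > 1$). First I would rewrite the hypotheses in terms of the two ``gaps'' $w_1 \eqdf c_1 - d_1$ and $w_2 \eqdf d_2 - c_2$. Condition (i) gives $w_1 > 0$, $w_2 > 0$, and $d_1 < c_1 \le c_2$, hence in particular $d_1 \le c_2$; condition (ii) rewrites as $w_1 - w_2 = (c_1 + c_2) - (d_1 + d_2) > 0$, i.e.\ $w_1 > w_2$. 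Note also that $d_1 + w_1 = c_1$ and $c_2 + w_2 = d_2$.

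The core step is the ``diminishing increments'' property of concave functions: since $\phi$ is concave and $d_1 \le c_2$, for the common step length $w_2 > 0$ the average slope of $\phi$ over the left interval $[d_1, d_1 + w_2]$ is at least its average slope over the right interval $[c_2, c_2 + w_2]$, so $\phi(c_2 + w_2) - \phi(c_2) \le \phi(d_1 + w_2) - \phi(d_1)$ (formally, $\int_{d_1}^{d_1+w_2}\phi'_+ \ge \int_{c_2}^{c_2+w_2}\phi'_+$ because $\phi'_+$ is nonincreasing, which also covers the overlapping case $d_1 \le c_2 < d_1 + w_2$). Since $d_2 = c_2 + w_2$, this reads $\phi(d_1) + \phi(d_2) \le \phi(d_1 + w_2) + \phi(c_2)$. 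To finish, I would use strict monotonicity of $\phi$ together with $w_2 < w_1$: $\phi(d_1 + w_2) < \phi(d_1 + w_1) = \phi(c_1)$. Combining the two displays gives $\phi(d_1) + \phi(d_2) \le \phi(d_1 + w_2) + \phi(c_2) < \phi(c_1) + \phi(c_2)$, which is exactly $\sqrt[\alpha]{d_1} + \sqrt[\alpha]{d_2} < \sqrt[\alpha]{c_1} + \sqrt[\alpha]{c_2}$.

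This argument is uniform in $\alpha \ge 1$ and needs no case split: for $\alpha = 1$ the first inequality is an equality and all the strictness comes from $w_2 < w_1$; for $\alpha > 1$ one may equally well route through the auxiliary pair $(d_1,\, c_1 + c_2 - d_1)$, which has the same sum as $(c_1, c_2)$ but is strictly more spread out (since $d_1 < c_1 \le c_2 < c_1 + c_2 - d_1$), so strict concavity of $x \mapsto \phi(x) + \phi(c_1 + c_2 - x)$ on $[0, c_1 + c_2]$ (symmetric about the midpoint) gives $\phi(c_1) + \phi(c_2) > \phi(d_1) + \phi(c_1 + c_2 - d_1) \ge \phi(d_1) + \phi(d_2)$, using $c_1 + c_2 - d_1 > d_2$ and monotonicity. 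The only point requiring any care — the ``main obstacle,'' such as it is — is invoking the diminishing‑increments inequality in the correct direction and handling the overlap case; the integral formulation above dispatches this, and I will present whichever of the two routes is shorter.
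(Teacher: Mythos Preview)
Your proof is correct. Your primary route---using the ``diminishing increments'' property of the concave function $\phi(x)=x^{1/\alpha}$ to bound $\phi(d_2)-\phi(c_2)\le\phi(d_1+w_2)-\phi(d_1)$, and then strict monotonicity to pass from $\phi(d_1+w_2)$ to $\phi(c_1)$---is genuinely different from the paper's argument and is cleaner: it handles all $\alpha\ge 1$ uniformly, with no case split and no derivative computation. The paper instead disposes of $\alpha=1$ separately and, for $\alpha>1$, sets $s=c_1+c_2$, defines $f(x)=x^{1/\alpha}+(s-x)^{1/\alpha}$, checks via $f'$ that $f$ is increasing on $(0,s/2)$, concludes $f(c_1)>f(d_1)$, and finishes by noting $s-d_1>d_2$. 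That is exactly the ``auxiliary pair $(d_1,\,c_1+c_2-d_1)$'' route you sketch at the end, so your second option coincides with the paper's proof. Your first option buys uniformity and avoids calculus; the paper's version trades that for a short explicit derivative check that some readers may find more transparent.
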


\begin{lemma}
\label{lemma:zero-bcvr}
Let $(0,b)$ be a \bcvr instance and let $p \in (0,1]$.  Suppose that
there exists a deployment that covers $[0,p]$ for $t$ time.  Then,
there exists a deployment that covers $[0,p]$ lifetime for $t$ time
that satisfies battery ordering.
\end{lemma}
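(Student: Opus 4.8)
The plan is to argue by an exchange argument, almost identical in spirit to what one expects for the fixed-radii Lemma~\ref{lemma:zero-bcfr}, but where the key ``local'' swap is justified by the technical Lemma~\ref{lemma:tech}. Start from an arbitrary deployment $(y,r)$ with lifetime $t$ that covers $[0,p]$. Since all sensors start at $x=0$, the only constraint linking $y_i$ to $b_i$ is $a|y_i| \le b_i - t r_i^\alpha \le b_i$, so feasibility is quite flexible; the real content is that we can reorder the sensors along the barrier to be sorted by battery size while still covering $[0,p]$ for time $t$. I would first dispose of sensors with $r_i = 0$ (they can be parked at $0$, contribute nothing, and do not affect the covered interval), so assume every sensor actually covers a nontrivial subinterval.

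The core step is the adjacent transposition. Suppose in the current deployment there are two sensors $i,j$ occupying adjacent ``slots'' in the left-to-right order with $b_i > b_j$ but $y_i < y_j$ (an inversion). Let them cover, respectively, intervals of total length $2r_i$ around $y_i$ and $2r_j$ around $y_j$; because they are adjacent in the order, the union of the two covered intervals, $[y_i - r_i, y_j + r_j]$ (assuming they overlap or abut — if there is a gap then $[0,p]$ is not covered, contradiction), is an interval whose left endpoint $\ell = y_i - r_i$ and right endpoint $\rho = y_j + r_j$ are fixed by the rest of the configuration. I want to re-deploy the two sensors so that the larger-battery sensor $j' := $ (the one with battery $b_i$) is now to the right, the smaller-battery one to the left, they still cover all of $[\ell,\rho]$, and each still has lifetime $\ge t$. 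Concretely, assign the two new positions so that the left sensor's left reach is exactly $\ell$ and the right sensor's right reach is exactly $\rho$, and the two covered subintervals meet at a common point $m \in [\ell,\rho]$; then the feasible radii are $r_{\text{left}} = \sqrt[\alpha]{(b_{\text{left}} - a\,(\text{distance}))/t}$ and similarly on the right, and covering $[\ell,\rho]$ amounts to $r_{\text{left}} + r_{\text{right}} \ge (\rho-\ell)/2$ together with the movement budget. Writing the new ``energy available for sensing'' of the two sensors as $c_1 \le c_2$ and the old ones as $d_1,d_2$, the point of putting the bigger battery on the (wider) side that needs more reach is that it makes the split of the total movement more balanced, so $c_1 + c_2 \ge d_1 + d_2$ while $d_1 < c_1 \le c_2 < d_2$; Lemma~\ref{lemma:tech} then gives $\sqrt[\alpha]{c_1} + \sqrt[\alpha]{c_2} \ge \sqrt[\alpha]{d_1} + \sqrt[\alpha]{d_2} \ge (\rho-\ell)/2$, so the new pair still covers $[\ell,\rho]$ for time $t$. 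Each such swap strictly decreases the number of inversions in the battery order, so after finitely many swaps we reach a deployment sorted by battery size; throughout, the covered interval contains $[0,p]$, proving the lemma.

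I would organize the writeup as: (1) reduce to the case where the two swapped sensors are adjacent and their covered intervals overlap; (2) define the new positions/radii explicitly via the ``meet at $m$, pin the outer endpoints'' prescription and verify feasibility (movement budget and the radius formula $r_i(p,t)$); (3) verify the three hypotheses of Lemma~\ref{lemma:tech} for the quantities $c_1,c_2,d_1,d_2$ that encode the remaining sensing energy $b - a\cdot(\text{distance moved})$ after the swap versus before; (4) conclude by the inversion-count potential argument. The main obstacle is step (2)–(3): one has to choose the meeting point $m$ (equivalently, how to split the ``load'') so that the inequalities $d_1 < c_1 \le c_2 < d_2$ and $c_1 + c_2 \ge d_1 + d_2$ hold simultaneously. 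The natural choice is to keep the outer reaches pinned at $\ell$ and $\rho$ and let $m$ be whatever it must be; one must check that moving the high-battery sensor to the far (right) side and the low-battery one to the near (left) side indeed shifts the distance-moved values so that the post-swap residual energies straddle the pre-swap ones in the required order. A subtlety worth a sentence is monotonicity of $g^t_i$ from Lemma~\ref{lemma:maxreach}: because the maximum reach $g^t_i(d^t_i)$ is increasing in $b_i$, the larger-battery sensor can always achieve at least the reach the smaller one had, which is what makes the pinning feasible before optimizing the split.
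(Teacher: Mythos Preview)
Your plan is the paper's argument: take a deployment with the fewest battery-order violations, pick an adjacent violating pair, swap them, and use Lemma~\ref{lemma:tech} to certify that the two swapped sensors still cover the same subinterval. The only cosmetic difference is that the paper pins just the \emph{left} endpoint (placing the small-battery sensor so its left reach equals $y_j-r_j$) and then pushes the big-battery sensor as far right as possible subject to tangency and $y'_j\le y_i$, finally showing its right reach exceeds $y_i+r_i$; your ``pin both outer reaches and check the middle'' variant leads to the same residual-energy inequalities $d_1<c_1,c_2<d_2$ and $c_1+c_2>d_1+d_2$, so Lemma~\ref{lemma:tech} applies either way.

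There is, however, a genuine gap in your termination argument. You claim each adjacent swap strictly decreases the inversion count, but the small-battery sensor in the pair moves \emph{past} the old position of its partner (in your pinning, $y'_j<y_i$), and in doing so it may leapfrog other sensors $k$ that were sitting to the left. If such a $k$ has an even smaller battery than $j$, the pair $(j,k)$ was \emph{not} inverted before and \emph{is} inverted after, so the global count need not drop. The paper handles this explicitly: sensors with smaller batteries that get bypassed are relocated to the new position of the moving sensor (they are redundant for coverage on that side), and a symmetric cleanup is done when the large-battery sensor moves right. Without this bookkeeping the potential argument does not go through. A smaller omission: reducing to ``every $r_i>0$'' is not the same as reducing to ``both sensors in the pair are essential''; the paper first disposes of the case where the barrier is covered without $i$ (move $i$ left) or without $j$ (move $j$ right) before invoking Lemma~\ref{lemma:tech}, and you should too.
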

\begin{proof}
Given a solution that covers $[0,p]$ with lifetime $t$, a pair of
sensors is said to violate battery ordering if $b_i < b_j$ and $y_i >
y_j$.
Let $y$ be a solution with lifetime $t$ for $(x,b,r)$ that minimizes
battery ordering violations.  If there are no violations, then we are
done.  Otherwise, we show that the number of violations can be
decreased.

If $y$ has ordering violations, then there must exist at least one
violation due to a pair of adjacent sensors.  Let $i$ and $j$ be such
sensors.  
If the barrier is covered without $i$, then $i$ is moved to $y_j$.
(Namely $y'_k = y_k$, for every $k \neq i$, and $y'_i = y_j$.)  $y'$
is feasible, since $i$ moves to the left.  Otherwise, if the barrier
is covered without $j$, then $j$ is moved to $y_i$ and $j$'s radius is
decreased accordingly.

Otherwise, both sensors actively participate in covering the barrier,
which means that the interval $[y_j-r_j,y_i+r_i]$ is covered by $i$
and $j$.  In this case, we place $i$ at $y'_i$ with radii $r'_i$, such
that $y'_i - r'_i = y_j - r_j$.  We place $j$ at the rightmost
location $y'_j$ such that $y'_j \leq y_i$ and $y'_j - r'_j \leq y'_i +
r'_i$.  If $y'_j = y_i$ then we are done, as sensor $j$ has more
battery power at $y_i$ than $i$ does at $y_i$.  Otherwise, we may
assume that $y'_j - r'_j= y'_i +r'_i$. We show that it must be that
$y'_j + r'_j \geq y_i+r_i$.
We have that $y'_i < y_j$ and $y'_j< y_i$.  It follows that
$\beta'_i + \beta'_j > \beta_i + \beta_j$, where $\beta_i = b_i - a
y_i$.  Also, notice that $\beta_i < \beta'_j < \beta_j$ and $\beta_i <
\beta'_i < \beta_j$.  It follows that
\(
r'_i + r'_j
= \sqrt[\alpha]{\beta'_i/t} + \sqrt[\alpha]{\beta'_j/t}
> \sqrt[\alpha]{\beta_i/t} + \sqrt[\alpha]{\beta_j/t}
= r_i + r_j
\),
where the inequality is due to Lemma~\ref{lemma:tech}.
Hence, 
\(
y'_j + r'_j 
=    (y_j - r_j) +  2r'_i + 2r'_j
>    (y_j - r_j) +  2r_i + 2r_j
\geq y_i+r_i
\).

Since $i$ moves to the left, it may bypass several sensors.  In this
case we move all sensors with smaller batteries that were bypassed by
$i$, to $y'_i$.  This can be done since these sensors are not needed
for covering to the right of $y'_i-r'_i$.
Similarly, since $j$ moves to the right, it may bypass several
sensors.  As long as there is a sensor with larger reach that was
bypassed by $j$, let $k$ be the rightmost such sensor.  Notice that
$k$ is not needed for covering to the left of $y'_j$.  Hence, if $y_k
+ r_k \geq y'_j + r'_j$, we move $j$ to $y_k$.  Otherwise, we move $k$
to $y'_j$.

In all cases, we get a deployment $y'$ that covers $[0,p]$ with
lifetime $t$ with a smaller number of violations than $y$.  A
contradiction.
\qed
\end{proof}


\paragraph*{\bf Separation.}
We are now ready to tackle the case where $x \in \set{0,1}^n$.

We start with the fixed radii case.  Given a \bcfr instance $(x,b,r)$
and a lifetime $t$, we assume without loss of generality that the
sensors are ordered according to the following \emph{bi-directional
reach} order: first the sensors that are located at 0 according to
reach order, and then the sensors that are located at 1 according to
reverse reach order.

We show that we may assume that the sensors are deployed using the
bi-directional reach order.  The first step is to show that 
the sensors that are located at 0 are deployed to the left
of the sensors that are placed at 1.  

\begin{lemma}
\label{lemma:sep-bcfr}
Let $(x,b,\rho)$ be a \bcfr instance, where $x \in \set{0,1}^n$, and
let $t$ be an achievable lifetime.  Then, there exists a feasible
solution $(y,r)$ with lifetime $t$ such that $y_i \leq y_j$, for every
$i \leq \ell < j$.
\end{lemma}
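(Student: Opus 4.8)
The plan is to prove Lemma~\ref{lemma:sep-bcfr} by an exchange argument: starting from any feasible solution $(y,r)$ with lifetime $t$, repeatedly eliminate ``inversions'' in which a sensor originating at $1$ is placed to the left of a sensor originating at $0$, until no such inversion remains. Concretely, let $\ell$ be the number of sensors with $x_i = 0$. Call a pair $(i,j)$ \emph{bad} if $x_i = 1$, $x_j = 0$, and $y_i < y_j$ (or more precisely $i$ appears before $j$ in the final left-to-right order while $x_i=1, x_j=0$). Among all feasible lifetime-$t$ solutions, pick one minimizing the number of bad pairs; if this number is zero we are done, so assume there is a bad pair, and then there is a bad pair of \emph{adjacent} sensors $i,j$ in the deployment.

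The core of the argument is a local swap. Given adjacent sensors $i$ (from $1$) and $j$ (from $0$) with $y_j$ just to the right of $y_i$, I want to show we can swap their relative order without decreasing the covered set and without violating battery constraints. The key observation is a symmetry/monotonicity fact: a sensor that started at $1$ and has been pushed left past position $y$ has ``paid'' $a\cdot(1-?)$ and has a certain budget for sensing; the same is true mirror-image for a sensor from $0$ pushed right. I would argue in the style of the other separation-type lemmas in the paper (Lemmas~\ref{lemma:zero-bcfr}, \ref{lemma:zero-bcvr}): if the barrier is covered without $i$, move $i$ rightward to $y_j$ (feasible, since a sensor from $1$ always has enough battery to reach any point $\ge$ its current position if it had enough to get to its current position — it is moving back toward its origin, so $a|y_i'-1| \le a|y_i-1|$); symmetrically if covered without $j$, move $j$ left to $y_i$. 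If both are essential, so that $[\,y_i - r_i,\; y_j + r_j\,]$ (the union of their two intervals, which overlap since they jointly cover a contiguous piece) is exactly what they contribute, then re-cover that same interval with $i$ on the right and $j$ on the left: put $j$ at the leftmost feasible point covering the left endpoint, put $i$ at the appropriate point to its right. Here I must check that the total ``sensing budget'' does not go down when each of the two sensors moves toward its own origin — sensor $i$ from $1$ moves right (toward $1$), gaining battery; sensor $j$ from $0$ moves left (toward $0$), gaining battery — so if anything each can cover at least as much, and the same interval is recovered. After the swap, as in the earlier proofs, if the moved sensors bypass other sensors, those bypassed sensors (being not needed on the far side) are slid along to the new positions.

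The main obstacle I anticipate is making the ``recover the same interval'' step fully rigorous in the fixed-radii setting, where radii are constrained to $\{0,\rho_i\}$ rather than continuously adjustable. Unlike the variable-radii case, I cannot shrink a radius to fine-tune an endpoint; a sensor either contributes its full $2\rho_i$ or nothing. So the relevant invariant must be about \emph{reachable positions}: sensor $i$ from $1$, after swap, must still be able to sit at some $y_i' \in [s(i), e(i)]$ with $y_i' \ge y_j'$ and $y_i' + \rho_i \ge$ (old right endpoint) while $y_j' - \rho_j \le$ (old left endpoint). I would establish this by observing that since $i$ moves toward $1$, its interval $[s(i),e(i)]$ only gets ``more permissive'' for rightward placement, and symmetrically for $j$; combined with the fact that the two intervals $[y_i-\rho_i,y_i+\rho_i]$ and $[y_j-\rho_j,y_j+\rho_j]$ overlapped (adjacency plus joint contiguous coverage), the swapped placement covering $[\,\min,\max\,]$ of the old union is feasible. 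I would also need to handle the degenerate sub-cases where one of the two sensors is powered down in the original solution (then it is trivially not essential and gets moved), and the boundary interactions with $y_{i-1}$ and the sensor following $j$. Once the adjacent swap is shown to strictly reduce the bad-pair count while preserving feasibility, lifetime, and coverage, induction on the number of bad pairs finishes the proof, yielding the claimed $y_i \le y_j$ for all $i \le \ell < j$.
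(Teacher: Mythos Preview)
Your approach is correct and essentially identical to the paper's: take a lifetime-$t$ solution minimizing the number of bad pairs, find an adjacent bad pair, and swap---moving a non-essential sensor back toward its origin, or in the both-essential case placing the $0$-sensor so its left edge matches the old union's left endpoint and the $1$-sensor so its right edge matches the old right endpoint, which moves each strictly toward its origin and hence costs no more battery. The obstacle you worry about does not materialize (these explicit fixed-radius positions already recover the union exactly), and the bypassing cleanup you mention is unnecessary here: because bad pairs are only between $0$-sensors and $1$-sensors, moving each swapped sensor toward its own endpoint can never create a new bad pair.
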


Next we show that we may assume that the sensors are deployed using
the bi-directional reach order.  

\begin{theorem}
\label{thm:bi-bcfr}
Let $(x,b,\rho)$ be a \bcfr instance, and let $t$ be an achievable
lifetime.  Then there exists a feasible solution $(y,r)$ with lifetime
$t$ such that the sensors are deployed using bi-directional reach
order.
\end{theorem}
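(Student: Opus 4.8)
The plan is to split the barrier into a left part handled by the sensors that start at $0$ and a right part handled by the sensors that start at $1$, and then invoke the one-sided result (Lemma~\ref{lemma:zero-bcfr}) on each part. Let $\ell$ be the number of sensors with $x_i=0$; by Lemma~\ref{lemma:sep-bcfr} we may start from a feasible solution $(y,r)$ of lifetime $t$ in which the group $L=\{1,\dots,\ell\}$ (all at $0$) is deployed entirely to the left of the group $R=\{\ell+1,\dots,n\}$ (all at $1$), i.e.\ $y_i\le y_j$ whenever $i\le\ell<j$.

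The first and main step is to exhibit a single point $p\in[0,1]$ such that $[0,p]$ is covered using only the sensors of $L$ and $[p,1]$ using only the sensors of $R$. Set $p=\sup\{z:[0,z]\subseteq\bigcup_{i\in L}[y_i-r_i,y_i+r_i]\}$ (with $p=0$ if $0$ is not covered by $L$); then $[0,p]$ is covered by $L$ by construction, and the claim is that $[p,1]$ is covered by $R$. If not, some $z\in(p,1]$ is covered by $L$ but not by $R$; since $[0,z]$ is not wholly covered by $L$, there is a hole $g\in(p,z)$ covered only by $R$, so the $L$-sensor covering $z$ must start strictly to the right of $g$ while the $R$-sensor covering $g$ stops strictly before $z$. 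Feeding these inequalities through the separation relation $y_i\le y_j$ ($i\in L$, $j\in R$) should yield a contradiction --- morally, an $R$-sensor that reaches left of all $L$-coverage is positioned to the right of every $L$-sensor and is therefore so wide that it also reaches past $z$. The degenerate cases $p=0$ and $p=1$, where one group already covers all of $[0,1]$, are dispatched directly. Making this step fully rigorous --- in particular excluding that $L$'s coverage reappears to the right of an $R$-hole --- is, I expect, the technical heart of the proof.

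With $p$ in hand, apply Lemma~\ref{lemma:zero-bcfr} to the sub-instance consisting of $L$ (all at $0$) with target prefix $[0,p]$: this gives a deployment of the $L$-sensors covering $[0,p]$ with lifetime $t$ in reach order, and by pushing each $L$-sensor as far left as coverage permits (which only lengthens a sensor's lifetime, as it decreases the distance travelled) we may assume every active $L$-sensor sits in $[0,p]$. By the left--right symmetry of the barrier, the mirror of Lemma~\ref{lemma:zero-bcfr} applied to $R$ (all at $1$) with target suffix $[p,1]$ gives a deployment of the $R$-sensors covering $[p,1]$ with lifetime $t$ in reverse reach order, with every active $R$-sensor in $[p,1]$. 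Concatenating the two partial deployments (and leaving zero-reach sensors at their starting edge with radius $0$, which is feasible), the union covers $[0,p]\cup[p,1]=[0,1]$, every $L$-position is $\le p\le$ every $R$-position, the order inside $L$ is reach order and inside $R$ is reverse reach order --- i.e.\ exactly the bi-directional reach order --- and feasibility and lifetime $t$ are inherited from the two sub-solutions. This completes the proof.
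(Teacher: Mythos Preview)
Your proposal follows essentially the same route as the paper: invoke Lemma~\ref{lemma:sep-bcfr} to separate the two groups, split the barrier at a point covered by both sides, apply Lemma~\ref{lemma:zero-bcfr} (and its mirror) to each side, and glue. The paper is terser---it simply asserts that the $0$-sensors cover $[0,p_0]$ and the $1$-sensors cover $[p_1,1]$ with $p_0\ge p_1$, which is exactly the splitting step you flag as the technical heart (your sketched contradiction via $y_i\le y_j$, $y_i-r_i>g\ge y_j-r_j$, hence $r_j>r_i$, hence $y_j+r_j>y_i+r_i\ge z$, is the right argument). The only real difference is in the final gluing: rather than pushing each group back into its own half-interval, the paper leaves the $L$-deployment $y^0$ untouched and sets $y'_i=\max\{y^1_i,\,y^0_\ell\}$ for $i>\ell$, which enforces the bi-directional order in one stroke and avoids your worry about ``pushing left'' disturbing the reach order.
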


We treat the variable radius case in a similar manner.  Given a \bcvr
instance $(x,b)$, we assume without loss of generality that the
sensors are ordered according to a \emph{bi-directional battery}
order: first the sensors that are located at 0 according to battery
order, and then the sensors that are located at 1 according to reverse
battery order.
The proofs of the next lemma and theorem are similar to the proofs of
Lemma~\ref{lemma:sep-bcfr} and Theorem~\ref{thm:bi-bcfr}.

\begin{lemma}
Let $(x,b)$ be a \bcvr instance, where $x \in \set{0,1}^n$, and let
$t$ be an achievable lifetime.  Then, there exists a feasible solution
$(y,r)$ with lifetime $t$ such that $y_i \leq y_j$, for every $i \leq
\ell < j$.
\end{lemma}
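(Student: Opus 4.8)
The plan is to mimic the structure of Lemma~\ref{lemma:sep-bcfr}, replacing the reach-based exchange arguments with battery-based ones, using Lemma~\ref{lemma:tech} exactly where Lemma~\ref{lemma:sep-bcfr} used the analogous \bcfr fact. Write $\ell$ for the number of sensors located at $0$, so sensors $1,\dots,\ell$ start at $0$ and sensors $\ell+1,\dots,n$ start at $1$; recall that, by the bi-directional battery order, among $1,\dots,\ell$ the battery powers are increasing and among $\ell+1,\dots,n$ they are decreasing. Starting from an arbitrary feasible solution $(y,r)$ with lifetime $t$, pick one that minimizes the number of pairs $i\le\ell<j$ with $y_i>y_j$, and argue that if this count is positive it can be decreased, contradicting minimality.

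First I would reduce to an adjacent violating pair: if there is any violation $i\le\ell<j$ with $y_i>y_j$, then there is one with no sensor positioned strictly between them. For such a pair I would run the same three-case exchange used in Lemma~\ref{lemma:zero-bcvr}: (i) if $[0,p]$ is covered without $i$, move $i$ left to $y_j$ — feasible since $i$ moves toward its origin $0$; (ii) if $[0,p]$ is covered without $j$, move $j$ right to $y_i$ and shrink its radius — feasible since $j$ moves toward its origin $1$; (iii) otherwise both are active and $[y_j-r_j,\,y_i+r_i]$ is covered by the two of them, so I re-attach $i$ at $y'_i$ with $y'_i-r'_i=y_j-r_j$ and push $j$ as far right as order and coverage allow. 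In case (iii) the crucial point is that the residual battery available to a sensor at position $q$ after moving from $x_i\in\{0,1\}$ is $b_i-a|q-x_i|$, and this is again an affine function of $q$ on the relevant range (decreasing in $q$ for a $0$-sensor, increasing for a $1$-sensor); the inequality $\beta'_i+\beta'_j>\beta_i+\beta_j$ together with $\beta_i<\beta'_i,\beta'_j<\beta_j$ feeds into Lemma~\ref{lemma:tech} to give $r'_i+r'_j>r_i+r_j$, hence $y'_j+r'_j\ge y_i+r_i$, so coverage of $[0,p]$ is preserved, exactly as in Lemma~\ref{lemma:zero-bcvr}.

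The one genuinely new bookkeeping issue — and the step I expect to be the main obstacle — is that moving $i$ left and $j$ right may cause them to bypass other sensors, and those bypassed sensors come from both origin groups, so I must restore the invariant "all $0$-sensors left of all $1$-sensors" without creating new violations or losing coverage. For sensors bypassed by the leftward-moving $i$: a $0$-sensor with a smaller battery that is passed is not needed for coverage right of $y'_i-r'_i$ (by the same argument as in Lemma~\ref{lemma:zero-bcvr}) and can be slid left to $y'_i$; a $1$-sensor that is passed by $i$ is being moved to the correct side of the $0/1$ split, which is progress, and it too is not needed for coverage where it sat, so it can be relocated to sit just right of $i$. Symmetrically for the rightward-moving $j$. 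I would phrase this as: repeatedly, while some bypassed sensor on the wrong side remains, relocate it as above; each such relocation only moves a $0$-sensor further left or a $1$-sensor further right, keeps every point of $[0,p]$ covered, and does not increase the violation count.

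Finally I would check that the net effect of one round is a strict decrease in the number of $(i\le\ell<j,\ y_i>y_j)$ violations — the treated pair $(i,j)$ is resolved and no new such pair is created because all the compensating moves respect the $0/1$ split — which contradicts the choice of $(y,r)$ as a minimizer. Hence the minimizer has zero violations, i.e.\ $y_i\le y_j$ for all $i\le\ell<j$, as claimed. The subsequent theorem (bi-directional battery order in full, analogous to Theorem~\ref{thm:bi-bcfr}) then follows by applying Lemma~\ref{lemma:zero-bcvr} separately to the prefix of $0$-sensors and, by symmetry, to the suffix of $1$-sensors.
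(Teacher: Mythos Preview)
Your overall strategy (minimize bad pairs, pick an adjacent violating pair, exchange) is right, and matches the paper's intent, but you have overcomplicated case~(iii) and introduced an actual error there.

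You correctly note that the residual battery of a $1$-sensor is \emph{increasing} in its position, but then immediately claim $\beta'_j<\beta_j$ and feed this into Lemma~\ref{lemma:tech}. That inequality is false: $j$ is a $1$-sensor moving right, so $\beta'_j>\beta_j$. The hypotheses of Lemma~\ref{lemma:tech} are therefore not satisfied, and your invocation of it is invalid as written. The conclusion $r'_i+r'_j>r_i+r_j$ you want does hold, but for the trivial reason that \emph{both} residuals increase, hence $r'_i>r_i$ and $r'_j>r_j$ separately; Lemma~\ref{lemma:tech} is not needed at all.

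This is exactly the simplification the paper exploits. Its proof is literally that of Lemma~\ref{lemma:sep-bcfr} with $\rho_i,\rho_j$ replaced by the current radii $r_i,r_j$: in case~(iii) set
\[
y'_i=y_j+(r_i-r_j),\qquad y'_j=y_i+(r_i-r_j),
\]
keeping the radii $r_i,r_j$ unchanged. Since $j$ is essential, $y_j-r_j<y_i-r_i$, so $y'_i<y_i$; since $i$ is essential, $y_i+r_i>y_j+r_j$, so $y'_j>y_j$. Thus the $0$-sensor $i$ moves toward $0$ and the $1$-sensor $j$ moves toward $1$, both strictly saving movement energy, so each can still sustain its old radius for time $t$. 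Coverage of $[y_j-r_j,\,y_i+r_i]$ is preserved exactly as in Lemma~\ref{lemma:sep-bcfr}. No re-attaching, no Lemma~\ref{lemma:tech}, no bypass bookkeeping is required: the bad-pair count here involves only $0$-sensors versus $1$-sensors, and moving a $0$-sensor left or a $1$-sensor right can never create a new such pair, so the count strictly drops. Your elaborate handling of bypassed sensors is solving a problem that does not arise for this lemma.
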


\begin{theorem}
Let $(x,b)$ be a \bcvr instance, and let $t$ be an achievable
lifetime.  Then there exists a feasible solution $(y,r)$ with lifetime
$t$ such that the sensors are deployed using bi-directional battery
order.
\end{theorem}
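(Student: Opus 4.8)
The plan is to mirror the two–stage argument used for the fixed–radii case, namely the passage from Lemma~\ref{lemma:sep-bcfr} to Theorem~\ref{thm:bi-bcfr}, replacing the reach ordering by the battery ordering and invoking Lemma~\ref{lemma:zero-bcvr} in place of Lemma~\ref{lemma:zero-bcfr}.

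\emph{Stage 1 (separation).} By the preceding lemma (the \bcvr analogue of Lemma~\ref{lemma:sep-bcfr}) there is a feasible solution $(y,r)$ with lifetime $t$ in which the $\ell$ sensors initially at $0$ occupy positions to the left of the $n-\ell$ sensors initially at $1$. Write $A=\{1,\dots,\ell\}$ for the first block and $B=\{\ell+1,\dots,n\}$ for the second, so that $y_1\le\cdots\le y_\ell\le y_{\ell+1}\le\cdots\le y_n$.

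\emph{Stage 2 (cut the barrier).} I would then argue, exactly as in the step from Lemma~\ref{lemma:sep-bcfr} to Theorem~\ref{thm:bi-bcfr}, that the solution can be modified so that there is a cut point $m\in[0,1]$ with $[0,m]$ covered by $A$ alone and $[m,1]$ covered by $B$ alone, each with lifetime $t$, and with every sensor of $A$ confined to $[0,m]$ and every sensor of $B$ confined to $[m,1]$. Since the positions of $A$ all precede those of $B$, the natural choice is for $m$ to be the right endpoint of the prefix covered by $A$; the modification only powers sensors down or shifts them within an already–covered region, so lifetime $t$ and feasibility are retained.

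\emph{Stage 3 (reorder each block).} The restriction of the solution to $A$ is a $(0,b)$ instance covering $[0,m]$ with lifetime $t$, so Lemma~\ref{lemma:zero-bcvr} applied with $p=m$ gives a deployment of $A$ covering $[0,m]$ with lifetime $t$ that satisfies battery ordering; inspecting that proof, the reordering moves sensors only into the already covered prefix, so the sensors of $A$ stay in $[0,m]$. Applying the reflection $z\mapsto 1-z$ turns $B$ into a $(0,b)$ instance covering a prefix of length $1-m$, and Lemma~\ref{lemma:zero-bcvr} then yields a deployment of $B$ covering $[m,1]$ with lifetime $t$ in reverse battery order, with all of $B$ staying in $[m,1]$. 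Concatenating the two reordered blocks produces a feasible solution with lifetime $t$ obeying the bi-directional battery order: $A$ and $B$ now operate on the disjoint sub-barriers $[0,m]$ and $[m,1]$, so feasibility is preserved, the constraint $y_i\le y_j$ for $i\le\ell<j$ is automatic, and within each block the order and the coverage are precisely those guaranteed by Lemma~\ref{lemma:zero-bcvr}.

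\emph{Main obstacle.} I expect Stage 2 to be the crux: producing a single clean cut $m$, and certifying that after the modification each block still covers its own half of the barrier with lifetime $t$ while its sensors remain in that half. Once the barrier is split this way, Stage 3 is essentially a direct double application of Lemma~\ref{lemma:zero-bcvr} (once for $A$, once for the reflected $B$), and the only residual bookkeeping is to confirm that the within-block reorderings never push an $A$-sensor past $m$ or a $B$-sensor left of $m$, which follows from the fact that those reorderings only relocate sensors into the region already covered.
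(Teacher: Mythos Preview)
Your plan is essentially the paper's: apply the \bcvr separation lemma, then Lemma~\ref{lemma:zero-bcvr} on each side, then glue. The only real difference is in the gluing. You look for a single cut point $m$ with each group confined to its own sub-barrier, and you flag producing and maintaining this confinement as the main obstacle. The paper sidesteps this entirely: after separation it merely records that the $0$-sensors cover some $[0,p_0]$ and the $1$-sensors some $[p_1,1]$ with $p_0\ge p_1$ (no disjointness needed), reorders each side independently via Lemma~\ref{lemma:zero-bcvr} to obtain deployments $y^0$ and $y^1$, and then forces the bi-directional order by setting $y'_i=\max\{y^1_i,y^0_\ell\}$ for $i>\ell$. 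Any $1$-sensor shifted this way moves toward its initial point $1$ and hence keeps lifetime $t$, and whatever coverage it abandons on the left lies inside $[0,p_0]$, which the $0$-sensors already handle. So your Stage~2 is unnecessary, and the ``residual bookkeeping'' you anticipate in Stage~3 (that the swaps inside Lemma~\ref{lemma:zero-bcvr} keep sensors confined to $[0,m]$) is also not needed---which is fortunate, since that confinement does not obviously follow from the swap argument in that lemma's proof.
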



\section{Discussion and Open Problems}

We briefly discuss some research directions and open problems.
%
We have shown that 
\bcvr is strongly NP-Hard.  Finding an approximation algorithm or 
showing hardness of approximation remains open.
In a natural extension model, sensors could be located anywhere in the
plane and asked to cover a boundary or a circular boundary.  In a more
general model the sensors need to cover the plane or part of the plane
where their initial locations could be anywhere.
Another model which can be considered is the {\em duty cycling} model
in which sensors are partitioned into shifts that cover the barrier.
Bar-Noy \textit{et al.}~\cite{BBR12} considered this model for
stationary sensors and $\alpha=1$.  Extending it to moving sensors and
$\alpha > 1$ is an interesting research direction.
Finally, in the most general covering problem with the goal of
maximizing the coverage lifetime, 
sensors could change their locations
and sensing ranges at any time.  
Coverage terminates when all the batteries are drained.


\bibliographystyle{abbrv}
\bibliography{sensors}


\newpage
\appendix

\section{Extreme Movement Costs}
\label{sec:friction}

In this section we consider the two extreme cases, the static case ($a
= \infty$) and the fully dynamic case ($a=0$).


\subsection{The Static Case}

In the static case the initial deployment is the final deployment,
i.e., $y = x$, and therefore a feasible solution is a radii assignment
$r$, such that $[0,1] \subseteq \cup_i [x_i - r_i, x_i + r_i]$.

We describe a simple algorithm for static \bcfr.  First, if $[0,1]
\not\subseteq \cup_i [x_i - \rho_i, x_i + \rho_i]$, then the maximum
lifetime is 0.  Otherwise, compute $t_i = b_i/\rho_i^\alpha$ for every
$i$, and let $S = \emptyset$.  Then, as long as $S$ does not cover the
barrier, add $i = \argmax_{i \not\in S} t_i$ to $S$.  Finally, assign
$r_i = \rho_i$, for $i \in S$, and $r_i = 0$, for $i \not\in S$.  The
correctness of this algorithm is straightforward.

Bar-Noy \textit{et al}.~\cite{BBR12corr} presented a polynomial time
algorithm for static \bcvr with $\alpha = 1$.  This algorithm readily
extends to static \bcvr with $\alpha>1$.  We refer the reader
to~\cite{BBR12corr} for the details.


\subsection{Fully Dynamic Case}

In the fully dynamic case movement is for free, and therefore any
radii vector $r$, such that $\sum_i 2r_i \geq 1$, has a deployment
vector $y$ such that $(y,r)$ is a feasible pair.  (e.g., $y_i =
\sum_{j=1}^{i-1} 2r_j + r_i$, for every $i$.)

We describe a simple algorithm for fully dynamic \bcfr.  First, if $\sum_i
2\rho_i < 1$, the maximum lifetime is 0.  Otherwise, compute $t_i =
b_i/\rho_i^\alpha$ for every $i$, and let $S = \emptyset$.  Then, as
long as $\sum_{i \in S} 2\rho_i < 1$, add $i = \argmax_{i \not\in S}
t_i$ to $S$.  Finally, assign $r_i = \rho_i$, for $i \in S$, and $r_i
= 0$, for $i \not\in S$.  The correctness of this algorithm is
straightforward.

We now consider fully dynamic \bcvr.  Given a feasible radii vector
$r$ and a corresponding deployment vector $y$, the lifetime of sensors
$i$ is simply $L_i(y,r) = b_i/r_i^\alpha$, and the lifetime of the
system is $L(y,r) = \min_i L_i(y,r)$.

\begin{theorem}
\label{thm:nofriction}
Let $a=0$.  Given a \bcvr instance, the radii assignment 
\(
r_i = \frac{\sqrt[\alpha]{b_i}}{2\sum_j \sqrt[\alpha]{b_j}}
\), 
for every $i$, is optimal.
\end{theorem}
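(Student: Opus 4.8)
The plan is to show that the given radii assignment is both feasible and optimal by a direct exchange/balancing argument. First I would check feasibility: with $r_i = \sqrt[\alpha]{b_i}/(2\sum_j \sqrt[\alpha]{b_j})$ we have $\sum_i 2r_i = \sum_i \sqrt[\alpha]{b_i}/\sum_j \sqrt[\alpha]{b_j} = 1$, so by the observation preceding Theorem~\ref{thm:nofriction} (for $a=0$, any $r$ with $\sum_i 2r_i \ge 1$ admits a feasible deployment), this $r$ is realizable. The lifetime it achieves is $L_i = b_i / r_i^\alpha = b_i \cdot (2\sum_j \sqrt[\alpha]{b_j})^\alpha / b_i = (2\sum_j \sqrt[\alpha]{b_j})^\alpha$, which is the same for every $i$; so $L(y,r) = (2\sum_j \sqrt[\alpha]{b_j})^\alpha$.

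Next I would argue optimality. Given any feasible pair $(y,r)$ in the fully dynamic setting, feasibility forces $\sum_i 2r_i \ge 1$, i.e. $\sum_i r_i \ge 1/2$. The lifetime is $L(y,r) = \min_i b_i/r_i^\alpha$, so for every $i$ we have $r_i \ge \sqrt[\alpha]{b_i/L(y,r)} = \sqrt[\alpha]{b_i}/\sqrt[\alpha]{L(y,r)}$. Summing over $i$ and combining with $\sum_i r_i \ge 1/2$ gives
\[
\frac{1}{2} \le \sum_i r_i
\quad\text{and}\quad
\sum_i r_i \ge \frac{\sum_j \sqrt[\alpha]{b_j}}{\sqrt[\alpha]{L(y,r)}},
\]
but actually the useful direction is the reverse: since $\sum_i r_i \ge 1/2$ is the constraint and each $r_i$ must be at least $\sqrt[\alpha]{b_i/L}$, the smallest feasible total is achieved only if we can push $L$ as large as possible while keeping $\sum_i \sqrt[\alpha]{b_i/L} \le \sum_i r_i$; to have any feasible $r$ at all with lifetime $L$ we need $\sum_i \sqrt[\alpha]{b_i/L} \le \sum_i r_i$ for the actual $r$, and since we may as well take $r_i = \sqrt[\alpha]{b_i/L}$ (decreasing any $r_i$ only helps feasibility via... no — we need $\sum 2r_i \ge 1$). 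Let me restate cleanly: for the pair to be feasible with lifetime $\ge L$ we need simultaneously $r_i \ge \sqrt[\alpha]{b_i/L}$ for all $i$ and $\sum_i r_i \ge 1/2$; the first set of inequalities is what constrains $L$ from above only in combination with — actually it does not, a large $L$ forces small $r_i$ which is compatible with $\sum r_i \ge 1/2$ only if... it is the other way. The correct chain: feasibility needs $\sum_i 2r_i \ge 1$; but to attain lifetime $L$ we need $r_i \ge \sqrt[\alpha]{b_i/L}$. These are compatible, so $L$ is unbounded?? No: the issue is $\sum 2r_i \ge 1$ is a lower bound on total radius, not an upper bound, so there is genuinely no constraint forcing $L$ down — EXCEPT that the excerpt's claimed optimum is finite, so I must be misreading the model.

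Re-examining: the resolution is that in the fully dynamic case each $r_i$ can be taken arbitrarily, but feasibility is $[0,1] \subseteq \bigcup_i [y_i - r_i, y_i + r_i]$, and the total measure of the union is at most $\sum_i 2r_i$, forcing $\sum_i 2r_i \ge 1$ — that part is right. The lifetime $b_i/r_i^\alpha$ decreases in $r_i$, so to maximize $\min_i b_i/r_i^\alpha$ we want all $r_i$ small, but $\sum 2r_i \ge 1$ stops us. So the real optimization is $\max \min_i b_i/r_i^\alpha$ subject to $\sum_i r_i = 1/2$ (no reason to exceed), and the main step — the one genuine obstacle — is the Lagrangian/balancing argument that the min is maximized when all $b_i/r_i^\alpha$ are equal, i.e. $r_i \propto \sqrt[\alpha]{b_i}$, and then the constraint $\sum r_i = 1/2$ pins down the constant to give exactly the stated formula. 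Concretely: at the optimum no single $L_i$ can be strictly above the others (else shrink that $r_i$, enlarge another, strictly increasing the min if done carefully), so $b_i/r_i^\alpha$ is constant $= L$, whence $r_i = \sqrt[\alpha]{b_i/L}$, and $\sum_i r_i = 1/2$ gives $\sqrt[\alpha]{L} = 2\sum_j \sqrt[\alpha]{b_j}$, i.e. $L = (2\sum_j\sqrt[\alpha]{b_j})^\alpha$ and $r_i$ as claimed. I would write the "else shrink-and-enlarge" step carefully since that equalization argument is the crux; everything around it is arithmetic.
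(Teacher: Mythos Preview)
Your final approach—reduce to maximizing $\min_i b_i/r_i^\alpha$ subject to $\sum_i r_i = 1/2$ and then argue by equalization—is sound in outline, but the exchange step is stated backwards. If some $L_i$ is strictly above the minimum, you want to \emph{enlarge} $r_i$ (this lowers $L_i$ slightly, which is harmless since $L_i$ was above the min) and \emph{shrink} some $r_j$ that attains the minimum (raising $L_j$); the sum $\sum r_k$ is preserved and the minimum strictly increases. Shrinking $r_i$ and enlarging another, as you wrote, pushes $L_i$ further up and the min further down. With the correct direction (and the usual care when several indices tie for the minimum), the equalization argument goes through and yields the stated $r$.

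The paper takes a shorter and genuinely different route that sidesteps the exchange entirely. After verifying feasibility and computing $L(r) = (2\sum_j \sqrt[\alpha]{b_j})^\alpha$ exactly as you did, it argues directly: any other feasible $r'$ satisfies $\sum_i r'_i \ge \tfrac12 = \sum_i r_i$, so since $r' \neq r$ there must be some index with $r'_i > r_i$; for that $i$ one has $L_i(r') = b_i/(r'_i)^\alpha < b_i/r_i^\alpha = L(r)$, whence $L(r') \le L_i(r') < L(r)$. That is the whole proof—pigeonhole on the total radius, no balancing needed. Your approach buys a characterization of the optimizer from first principles; the paper's approach buys brevity because the candidate $r$ is already handed to you in the statement.
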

\begin{proof}
First, observe that
\(
2 \sum_i r_i 
= \sum_i \frac{\sqrt[\alpha]{b_i}}{\sum_j \sqrt[\alpha]{b_j}}
= 1
\),
which means that $r$ is feasible. 
Furthermore,
\[
L_i(r)
= b_i/r_i^\alpha
= b_i \cdot 
  \paren{\frac{2\sum_j \sqrt[\alpha]{b_j}}{\sqrt[\alpha]{b_i}}}^\alpha
= \paren{2\sum_j \sqrt[\alpha]{b_j}}^\alpha
~,
\]
for every $i$.  Hence, $L(r) = (2\sum_j \sqrt[\alpha]{b_j})^\alpha$.

We show that $L(r) < L(r')$, for any radii assignment $r' \neq r$.
Since $r'$ is feasible, we have that $2 \sum_i r'_i \geq 1$.  It follows
that there exists $i$ such that $r'_i > r_i$.  Hence, $L(r') \geq
L_i(r') > L_i(r) = L(r)$.
\qed
\end{proof}

\section{Omitted Proofs}

\begin{proof}[of Lemma~\ref{lemma:maxreach}]
First consider the case where $\alpha>1$.
For $d \in [b_i/a,0)$ we get
\[
\frac{\partial h^t_i}{\partial d}
= 1 + \frac{a}{\alpha t}\paren{\frac{b_i+ad}{t}}^{1/\alpha-1}
> 0
~.
\]
For $d \in (0,b_i/a]$, the derivative of $g^t_i$ is given by
\[
\frac{\partial g^t_i}{\partial d}
= 1 - \frac{a}{\alpha t}\paren{\frac{b_i-ad}{t}}^{1/\alpha-1}
~.
\]
It follows that $\frac{\partial g^t_i}{\partial d}(d)=0$ when 
\[
d=d^t_i 
= \frac{b_i}{a} - \frac{t}{a} \paren{\frac{a}{\alpha t}}^{\alpha/(\alpha-1)} 
= \frac{b_i}{a} - \inv{\alpha} \sqrt[\alpha-1]{\frac{a}{\alpha t}}
~.
\]
Furthermore, $g^t_i(d)>0$ when $d<d^t_i$, and $g^t_i(d)<0$ when
$d>d^t_i$.  The radius at this distance is
$\sqrt[\alpha-1]{\frac{a}{\alpha t}}$.  The maximum reach is thus
\[
g^t_i(d^t_i) 
= \frac{b_i}{a} +\paren{1- \inv{\alpha}} \sqrt[\alpha-1]{\frac{a}{\alpha t}}
~.
\]

For $\alpha=1$ we have
\[
g^t_i(d) = 
\begin{cases}
d(1-a/t) + b_i/t & d \geq 0, \\
d(1+a/t) + b_i/t & d < 0. 
\end{cases}
\]
Hence, 
\[
\frac{\partial g^t_i(d)}{\partial d} = 
\begin{cases}
1-a/t & d > 0, \\
1+a/t & d < 0. 
\end{cases}
\]
If $d>0$, we have several cases.
If $a>t$, the maximum occurs at $d^t_i=0$ and $g^t_i(d^t_i) =
\frac{b_i}{t}$.  If $a=t$, $g^t_i(d)=\frac{b_i}{t}$, for any $d \leq
\frac{b_i}{a}$.  If $a<t$, the function is increasing for any $d \leq
\frac{b_i}{a}$, and thus $d^t_i = \frac{b_i}{a}$ and $g^t_i(d^t_i) =
\frac{b_i}{a}$.  Hence, $g^t_i(d^t_i) =\frac{b_i}{\min\set{a,t}}$.
\qed
\end{proof}

\medskip

\begin{proof}[Proof of Theorem~\ref{thm:search-bcvr}]
If $u(i)<l(i)$ for some $i$, then no deployment that satisfies the
order $\prec$ exists by Observation~\ref{obs:lu}.  Hence, the
algorithm responds correctly.

We show that if the algorithm outputs YES, then the computed solution
is feasible.  First, notice that $y_{i-1} \leq y_i$, for every $i$, by
construction.  We prove by induction on $i$, that $y_j \in
[l(j),u(j)]$ for every $j \leq i$.
%
Consider the $i$th iteration.  If $z \not\in [q_L(i) - r_i(q_L(i),t),
q_R(i) + r_i(q_R(i),t)]$, then $y_i \in [l(i),u(i)]$, since
$\max\set{l(i),y_{i-1}} \leq \max\set{u(i),u(i-1)} \leq u(i)$.
Otherwise, $y_i = \max\set{\min\set{p_i(z,t),u(i),x_i + d_i^t},l(i)}
\in [l(i),u(i)]$.  Furthermore, if $j < i$ is moved to the left due
$i$, then $y_j = y_i \geq l(i) \geq l(j)$.
Finally, let $z_i$ denote the value of $z$ after the $i$th iteration.
(Initially, $z_0 = 0$.)  We prove by induction on $i$ that $[0,z_i]$
is covered.  Consider iteration $i$.  If $r_i=0$, then we are done.
Otherwise, $z_{i-1} \in [y_i-r_i,y_i+r_i]$ and $z_i = y_i+r_i$, and
the sensors in $S$ can be powered down and moved, since
$[y_j-r_j,y_j+r_j] \subseteq [y_i-r_i,y_i+r_i]$, for every $j \in S$.

Finally, we show that if the algorithm outputs NO, there is no
feasible solution.
We prove by induction that $[0,z_i$] is the longest interval that can
be covered by sensors $1,\ldots,i$.
In the base case, observe that $z_0=0$ is optimal.  
For the induction step, let $y'$ be a deployment of $1,\ldots,i$ that
covers the interval $[0,z'_i]$.  Let $[0,z'_{i-1}]$ be the interval
that it covers by $1,\ldots,i-1$.  By the inductive hypothesis,
$z'_{i-1} \leq z_{i-1}$.  
If $z'_i \leq z_{i-1}$, then we are done.  Otherwise, we have that
$y'_i + r_i(y'_i,t) > z_{i-1}$.  In this case we have that $y'_i
\in P(i)$.  It follows, by Lemma~\ref{lemma:useless}, that 
we place $i$ at $y_i = y^*_i$.  By Lemma~\ref{lemma:usefull} we have
$y_i$ is better than $y'_i$ in terms of coverage to the right, namely
$z_i= y_i + r_i(y_i,t) \geq y'_i + r_i(y'_i,t) = z'_i$.
%
\qed
\end{proof}

\medskip

\begin{proof}[Proof of Lemma~\ref{lemma:zero-bcfr}]
We first prove that we may focus on feasible solutions where $r=\rho$.
Given a feasible solution $(y,r)$ that covers $[0,p]$ with lifetime
$t$, we define $y'_i = y_i$, if $r_i = \rho_i$, and $y'_i = 0$,
otherwise.  The pair $(y',\rho)$ clearly covers $[0,p]$ with lifetime
$t$.  (Recall that we ignore sensors with zero reach.)

Given a solution that covers $[0,p]$ with lifetime $t$, a pair of
sensors is said to violate reach ordering if $i < j$ and $y_i > y_j$.
Let $(y,\rho)$ be a solution with lifetime $t$ for $(0,b,\rho)$ that
minimizes reach ordering violations.  If there are no violations, then
we are done.  Otherwise, we show that the number of violations can be
decreased.

If $y$ has ordering violations, then there must exist at least one
violation due to a pair of adjacent sensors.  Let $i$ and $j$ be such
sensors.  
If the barrier is covered without $i$, then $i$ is moved to $y_j$.
(Namely $y'_k = y_k$, for every $k \neq i$, and $y'_i = y_j$.)  $y'$
is feasible, since $i$ moves to the left.
Otherwise, if the barrier is covered without $j$, then $j$ is moved to
$y'_j = \min\set{y_i,f_t(j)-\rho_j}$.  If $y'_j = y_i$, then we are
done.  If $y'_j < y_i$, then $[y_i-\rho_i,y_i+\rho_i] \subseteq
[y_j-\rho_j,y_j+\rho_j]$, since $f_t(j) > f_t(i)$.  It follows that
the barrier is covered without $i$, and so we can move $i$ to $y'_j$.
Since $y'_j \leq f_t(j)-\rho_j$, and $i$ moves to the left, we get a
feasible deployment.

If both sensors participate in the cover, we define a new deployment
$y'$ by moving $i$ to $y'_i = y_j + (\rho_i - \rho_j)$ and moving $j$
to $y'_j = y_i + (\rho_i - \rho_j)$.  The interval $[0,p]$ is covered,
since $[y_j - \rho_j, y_i + \rho_i]$ is covered.  Also, $y'_i \leq
y'_j$.  Furthermore, $i$ and $j$ can maintain their radii for $t$
time, since $y'_i \leq y_i$ and $f_t(j) > f_t(i)$.
Since $i$ moves to the left, it may bypass several sensors.  In this case
we move all sensors with smaller reach that were bypassed by $i$, 
to $y'_i$.
Since $j$ moves to the right, it may bypass several sensors.  As long
as there is a sensor with larger reach that was bypassed by $j$, let
$k$ be the rightmost such sensor, and move both $j$ and $k$ to
$\min\set{y'_j,f_t(k)-\rho_k}$.  Notice that $k$ is not needed for
covering to the left of $y'_j$, and thus it can be moved to the right,
as long as it has the power to do so.  If $k$ cannot move to $y'_j$,
it follows that $j$ is not needed for covering to the right of $y'_k$.

In all cases, we get a deployment $y'$ that covers $[0,p]$ with
lifetime $t$ with a smaller number of violations than $y$.  A
contradiction.
\qed
\end{proof}

\medskip

\begin{proof}[of Lemma~\ref{lemma:tech}]
The case where $\alpha=1$ is immediate, so henceforth we assume that
$\alpha>1$.
Let $s = c_1+c_2$, and let $d_2' = s-d_1$.  We prove that $d_1^{1/\alpha} +
(s-d_1)^{1/\alpha} < c_1^{1/\alpha} + (s-c_1)^{1/\alpha}$.  Since $d'_2 >
d_2$ the lemma follows.

Define $f(x) = x^{1/\alpha} + (s-x)^{1/\alpha}$.  
The derivative is:
\[
f'(x) = \frac{x^{1/\alpha - 1}}{\alpha} - \frac{(s-x)^{1/\alpha - 1}}{\alpha}
      = \inv{\alpha x^{1 - 1/\alpha}} - \inv{\alpha (s-x)^{1 - 1/\alpha}}
~.
\]
$f'(x) = 0$ implies that $x = \frac{s}{2}$ and $f'(x)>0$ for $0 \leq x
< \frac{s}{2}$.  It follows that $f(x)$ is an increasing function in
the interval $(0,\frac{s}{2})$.  Thus we have $f(c_1) > f(d_1)$.
\qed
\end{proof}

\medskip

\begin{proof}[of Lemma~\ref{lemma:sep-bcfr}]
Given a deployment $y$ for $(x,b,r)$, a pair of sensors is called
\emph{bad} if $i \leq \ell < j$ and $y_i > y_j$.
Let $y$ be a deployment with lifetime $t$ for $(x,b,r)$ that minimizes
the number of bad pairs.  If there are no bad pairs, then we are done.
Otherwise, we show that the number of bad pairs can be decreased.
If $y$ has a bad pair, then there must exist at least one bad pair of
adjacent sensors.  Let $i$ and $j$ be such sensors.  We construct a
new deployment vector $y'$ as follows.  

If the barrier is covered without $i$, then $i$ is moved to $0$,
namely $y'_k = y_k$, for every $k \neq i$, and $y'_i = 0$.  Otherwise,
if the barrier is covered without $j$, then $j$ is moved to $1$,
namely $y'_k = y_k$, for every $k \neq j$, and $y'_i = 1$.  In both
cases the pair $(y',r)$ is feasible and has lifetime $t$.  Furthermore
the number of bad pairs decreases.  A contradiction.

If both $i$ and $j$ are essential to the cover, we define $y'$ as
follows:
\[
y'_k = 
\begin{cases}
y_j + (\rho_i - \rho_j) & k = i, \\
y_i + (\rho_i - \rho_j) & k = j, \\
y_k               & k \neq i, j. 
\end{cases}
\]
We show that $(y',r)$ is a feasible solution.  First, notice that
$y'_i = y_j + (\rho_i - \rho_j) < y_i$, since otherwise the barrier
can be covered without $j$.  Similarly, $y'_j = y_i + (\rho_i -
\rho_j) < y_j$.  Hence, $y'_k \leq y_k$, for $k \leq \ell$, and $y'_k
\geq y_k$, for $k > \ell$, which means that $y'$ consumes less power
than $y$.  Also the barrier is covered, since the interval $[y_j -
\rho_j, y_i + \rho_i]$ is covered by $i$ and $j$.  Finally, $y'_i =
y_j + (\rho_i - \rho_j) \leq y_i + (\rho_i - \rho_j) = y'_j$, and
therefore the number of bad pair decreases.  A contradiction.
\qed
\end{proof}

\medskip

\begin{proof}[of Theorem~\ref{thm:bi-bcfr}]
By Lemma~\ref{lemma:sep-bcfr} we know that there exists a deployment
$y$, such that $y_i \leq y_j$, for every $i \leq \ell < j$.  It
follows that sensors from 0 cover $[0,p_0]$ while sensors from 1 cover
$[p_1,1]$, where $p_0 \geq p_1$.
Lemma~\ref{lemma:zero-bcfr} implies that there is a deployment $y^0$
of the sensors from 0 that covers $[0,p_0]$ that satisfies reach
order, and that there is a deployment $y^1$ of sensors from 1 that
covers $[p_1,1]$ that satisfies reverse reach order.
Define 
\[
y'_i = 
\begin{cases}
y^0_i                    & i \leq \ell, \\
\max\set{y^1_i,y^0_\ell} & i > \ell.
\end{cases}
\]
$y'$ covers $[0,1]$ and it satisfies the bi-directional reach order.
\qed
\end{proof}

\section{Hardness Results}
\label{sec:hardness}

In this section we show that 
\begin{inparaenum}[(i)]
\item \bcfr is NP-hard, even if $x \in p^n$, for any $p \in (0,1)$. 
\item There is no polynomial time multiplicative approximation 
      algorithm for \bcfr, unless P$=$NP, even if $x = p^n$.
\item There is no polynomial time algorithm that computes a solution 
      within an additive factor $\eps$, for some constant $\eps>0$,
      unless P$=$NP, even if $x = p^n$.
\item \bcvr is strongly NP-hard.
\end{inparaenum}
The hardness results apply to any $a > 0$ and $\alpha \geq 1$.

We note that throughout the section we assume that $\alpha$ is
integral for ease of presentation.  More specifically, we assume that
exponentiation with exponent $\alpha$ can be done in polynomial time.
Our constructions can be fixed by taking a numerical approximation
which is slightly larger than the required power.


\subsection{Fixed Radii}

The first result is obtained using a reduction from \partition.%
\footnote{A \partition instance consists of a list $a_1,\ldots,a_n$ of 
positive integers, and the goal is to decide whether there exists $I
\subseteq \set{1,\ldots,n}$ such that $\sum_{i \in I} a_i = \sum_{i
\not\in I} a_i$.}
Roughly speaking, our reduction uses a sensor that cannot move if it
is required to maintain its radius for one unit of time.  This sensor
splits the line into two segments, and therefore the question of
whether the given numbers can be partitioned into two subsets of equal
sum translates into the question of whether we can cover the two
segments for some time interval.  
%

\begin{lemma}
\label{lemma:bcfr-hard-half}
\bcfr is NP-hard, for any $a > 0$ and $\alpha \geq 1$, even if
$x = \half^n$.  
Furthermore, in this case it is NP-hard to decide whether the maximum
lifetime is zero or at least $a$.
\end{lemma}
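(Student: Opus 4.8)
The plan is to reduce from \partition. Given an instance $a_1,\dots,a_n$ with $\sum_i a_i = 2S$, I construct a \bcfr instance in which all sensors start at $x = \half$. The key gadget is one special ``blocking'' sensor, say sensor $0$, whose battery and fixed radius $\rho_0$ are chosen so that (i) if sensor $0$ must sustain its radius for one unit of time it has exactly enough battery to stay put and no energy to move, i.e.\ $b_0 = a \cdot (\text{required travel}) + t\rho_0^\alpha$ is tight at $t = a$ with zero permissible movement, and (ii) sensor $0$'s coverage interval $[\half-\rho_0,\half+\rho_0]$ has length $2\rho_0$ exactly matching a small ``bridge'' region around $\half$. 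Concretely I will scale so that sensor $0$ covers a central interval, leaving a left segment of length $S \cdot c$ and a right segment of length $S \cdot c$ (for a suitable scaling constant $c$) to be covered by the remaining $n$ sensors. Each number $a_i$ is encoded by a sensor $i$ with fixed radius $\rho_i$ proportional to $a_i$ and battery $b_i$ chosen so that, if sensor $i$ is used (radius $\rho_i$) and must sustain lifetime $t = a$, it can \emph{just barely} reach either the left segment or the right segment from $\half$ but cannot do more than cover its own length-$2\rho_i$ subinterval contiguously. The normalization of $[0,1]$ forces $2\rho_0 + \sum_i 2\rho_i = 1$ so that there is no slack.

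The main steps, in order: First, pin down the arithmetic of the blocking sensor so that at $t = a$ it is immobilized and its covered interval is exactly the central bridge; this uses $L_0(y,r) = (b_0 - a|y_0-\half|)/\rho_0^\alpha$ and solving $L_0 = a$ with $|y_0-\half| = 0$. Second, set up sensors $1,\dots,n$ so that at $t=a$ each usable sensor $i$ has travel budget $(b_i - a\rho_i^\alpha)/a$ exactly equal to what is needed to move from $\half$ to the near end of one of the two segments and then cover a length-$2\rho_i$ stretch — i.e.\ the sensors tile each segment with no gaps and no overlaps possible. Third, argue the forward direction: a \partition $I$ with $\sum_{i\in I} a_i = S$ yields a deployment where the sensors in $I$ tile the left segment and those outside tile the right segment, all with lifetime exactly $a$ (using Theorem/Observation-style reasoning that contiguous tiling with the tight budget is feasible). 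Fourth, the reverse direction: if lifetime $\geq a$ is achievable, then sensor $0$ cannot move and must cover the central bridge, the two segments must be covered separately by disjoint sensor subsets (a sensor at lifetime $\geq a$ cannot straddle the bridge since sensor $0$ occupies it and there's no radius slack), and because the total radii sum is tight each segment is covered with no overlap, forcing $\sum_{i \in I} 2\rho_i = S c$, i.e.\ $\sum_{i\in I} a_i = S$. Finally, since the maximum lifetime is either $0$ (no valid \partition, segments cannot both be covered) or at least $a$, this also establishes the ``zero or at least $a$'' gap claim.

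The main obstacle I anticipate is the reverse direction's rigidity argument: I must rule out ``clever'' deployments where a sensor from the left segment partially helps the right, or where sensor $0$'s interval is not exactly centered, or where some sensor covers slightly less than its full $2\rho_i$ but another compensates. The tightness condition $2\rho_0 + \sum_i 2\rho_i = 1$ is what forces every used sensor to contribute its full diameter with zero overlap, and the immobilization of sensor $0$ (any movement drops its lifetime below $a$) pins the bridge in place; together these force a clean bipartition of the used sensors between the two segments. I will need to be careful that sensors can be ``powered down'' ($r_i = 0$) without breaking the argument — but since the total-diameter equality is tight, powering down any sensor makes full coverage impossible, so all sensors must be active, which is exactly what makes the \partition equivalence go through. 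A secondary technical point, already flagged in the paper, is that $\rho_i$ and $b_i$ may be irrational when $\alpha > 1$; I will choose rational approximations slightly on the safe side and absorb the error, noting the construction still has polynomial size since $\alpha$ is treated as a fixed integer.
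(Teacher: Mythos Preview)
Your construction is essentially the paper's: reduce from \textsc{Partition}, place all sensors at $\tfrac12$, encode each $a_i$ by a sensor with radius proportional to $a_i$, add one blocking sensor whose battery equals $a\rho_0^\alpha$ so it has no travel budget at lifetime $a$, and make the total diameter exactly $1$ so every sensor must be active with no overlap. The forward direction and the reverse direction at lifetime $\ge a$ are fine, and this already yields NP-hardness of \bcfr.

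The gap is in the ``Furthermore'' clause. You only argue the reverse direction under the hypothesis that lifetime $\ge a$, where the blocker is exactly immobilized. To conclude that a \textsc{No}-instance has lifetime \emph{zero}, you must also rule out solutions with lifetime strictly between $0$ and $a$. In that regime the blocker \emph{can} move: at lifetime $t<a$ its travel budget is $(b_0 - t\rho_0^\alpha)/a > 0$, so ``any movement drops its lifetime below $a$'' no longer pins it. Your final sentence asserts the dichotomy but does not justify it.

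The paper closes this gap with a discreteness argument you do not mention. It chooses $\rho_i = \frac{a_i}{2(B+1)}$ and $\rho_0 = \frac{1}{2(B+1)}$, so every radius is an integer multiple of $\frac{1}{2(B+1)}$. For \emph{any} positive lifetime the tightness $\sum_i 2\rho_i = 1$ forces a perfect tiling, hence every interval endpoint lies on the grid $\{\frac{k}{B+1}\}$ and the blocker's center lies on $\{\frac{2k+1}{2(B+1)}\}$. Independently of $t$, the blocker's total battery allows movement strictly less than $b_0/a = \rho_0^\alpha \le \rho_0 = \frac{1}{2(B+1)}$ (using $\alpha \ge 1$ and $\rho_0<1$), so the only reachable grid point is $\tfrac12$. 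This pins the blocker for every $t>0$, and then the tiling to the left of it forces a subset summing to $B/2$. Without this grid argument, your plan does not establish the zero-versus-$a$ gap.

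Two smaller remarks. First, your stipulation that each sensor's travel budget be ``exactly'' what is needed to reach the near end of a segment is both unnecessary and unworkable (the sensor does not know in advance which slot it will occupy); the paper simply gives every non-blocker a generous budget of $a/2$, enough to reach anywhere in $[0,1]$, and lets the radii tightness do all the rigidity work. Second, the radii themselves are rational in this construction; only the batteries involve $\rho_i^\alpha$, so the irrationality concern is milder than you suggest.
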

\begin{proof}
Given a \partition instance $a_1,...,a_n$, let $B = \sum_i a_i$.  We
construct a \bcfr instance with $n+1$ sensors as follows: $x_i =
\half$, for every $i$; 
\begin{align*}
\rho_i = &
\begin{cases}
\frac{a_i}{2(B+1)} & i \leq n, \\
\frac{1}{2(B+1)}   & i = n+1; 
\end{cases}
&&&
b_i = &
\begin{cases}
a\rho_i^\alpha + \frac{a}{2} & i \neq n+1, \\
a\rho_i^\alpha               & i = n+1.
\end{cases}
\end{align*}
We show that $(a_1,\dots,a_n) \in \partition$ implies that there
exists a solution with lifetime $a$, and that the maximum lifetime is
zero if $(a_1,\dots,a_n) \not\in \partition$.

Suppose that $(a_1,\dots,a_n) \in \partition$, and let $I \subseteq
\set{1,\ldots,n}$ be such that $\sum_{i \in I} a_i = \half \sum_i a_i$.
Set $r_i = \rho_i$, for every sensor $i$.  Use sensor $n+1$ to cover
the interval $[\half-\inv{2B+2}, \half+\inv{2B+2}]$, the sensors that
correspond to $I$ to cover the interval $[0,\half-\inv{2B+2}]$, and
the rest of the sensors to cover the interval $[\half+\inv{2B+2},1]$.
This is possible, since $\sum_{i \in I} 2\rho_i = \half - \inv{2B+2}$,
and $\sum_{i \in \set{1,\ldots,n} \setminus I} 2\rho_i = \half -
\frac{1}{2B+2}$.  It is not hard to verify that a lifetime of $a$ is
achievable.

Suppose that $(a_1,\dots,a_n) \not\in \partition$, and assume that
there exists a solution $(y,r)$ with non-zero lifetime.  It must be
that $r_i = \rho_i$, for every $i$, since $\sum_i 2\rho_i = 1$.  Since
$\alpha \geq 1$, sensor $n+1$ cannot move more than $\inv{2B+2}$.  It
follows that $y_{n+1} = \half$, since all radii are multiples of
$\inv{2B+2}$.  Thus there is a subset $I \subseteq \set{1,\ldots,n}$
of sensors that covers $[0,\half-\inv{2B+2}]$, and
\(
\sum_{i \in I} a_i 
=    (B+1) \sum_{i \in I} 2\rho_i 
=    \half B  
\).
Hence, $(a_1,\dots,a_n) \in \partition$.  A contradiction.
\qed 
\end{proof}

The next step is to prove a similar result for any $p \in (0,1)$.
Since we already considered $p=\half$, we assume, without loss of
generality, that $p < \half$.

\begin{lemma}
\label{lemma:bcfr-hard-p}
\bcfr is NP-hard, for any $a > 0$ and $\alpha \geq 1$, even if $x = p^n$, 
where $p \in (0,\half)$.
Furthermore, in this case it is NP-hard to decide whether the maximum
lifetime is zero or at least $a$.
\end{lemma}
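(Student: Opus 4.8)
The plan is to reduce from \partition once more, reusing the construction from Lemma~\ref{lemma:bcfr-hard-half} but shifting the ``blocking'' sensor away from the midpoint so that it sits above the point $p$ instead of above $\half$. Given a \partition instance $a_1,\dots,a_n$ with $B = \sum_i a_i$, I would place all $n+1$ sensors at $x_i = p$, keep the radii essentially as before (rescaled so that $\sum_i 2\rho_i = 1$ and all radii are integer multiples of a common small unit $\inv{2B+2}$), and give sensor $n+1$ just enough battery to sense with radius $\rho_{n+1}$ for one unit of time but not enough to move at all while doing so. As in the previous lemma, the immovable sensor $n+1$ splits $[0,1]$ into a left segment $[0,p-\rho_{n+1}]$ and a right segment $[p+\rho_{n+1},1]$, and since $\sum_i 2\rho_i = 1$ every sensor must be active in any positive-lifetime solution. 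The crucial difference from the symmetric case is that now the two segments have different lengths, so a naive ``partition into equal halves'' no longer matches the geometry.

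To handle the asymmetry, I would give the non-blocking sensors enough battery to travel far enough to reach either segment (an extra additive term of order $a$, exactly as the $\frac{a}{2}$ term in the previous proof, possibly enlarged to $\Theta(a)$ to cover the worst-case travel distance from $p$ to the far segment), so that lifetime-$a$ feasibility is governed purely by whether the multiset $\{\rho_1,\dots,\rho_n\}$ can be split to fill the two segments exactly. The lengths to be filled are $p$ on the left (minus $\rho_{n+1}$) and $1-p$ on the right (minus $\rho_{n+1}$), and their difference is $1-2p$. I would choose $\rho_{n+1}$ and a set of ``padding'' sensors with fixed radii summing to $\inv2(1-2p)$ whose job is to always go to the longer (right) segment; after accounting for this fixed padding, the remaining sensors — which are exactly the ones encoding $a_1,\dots,a_n$, with $\rho_i = \Theta(a_i)$ — must be split into two groups of equal total radius, i.e. equal total $a$-value. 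Thus a positive-lifetime (indeed lifetime-$\ge a$) solution exists if and only if $(a_1,\dots,a_n) \in \partition$. The granularity argument from Lemma~\ref{lemma:bcfr-hard-half} — all radii are multiples of $\inv{2B+2}$, so the blocking sensor cannot nudge off $p$ and the left segment must be covered by a subset of radii summing exactly to its length — carries over verbatim; this is what forces an \emph{exact} partition rather than an approximate one.

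I expect the main obstacle to be bookkeeping the ``padding'' sensors cleanly: I must add auxiliary fixed-radius sensors that (a) make the two segment lengths reconcile with an honest \partition instance, (b) have batteries large enough that their forced long walk to the right segment still leaves lifetime $\ge a$, and (c) do not themselves create spurious solutions (so their radii should be incommensurable-in-sum with any proper sub-combination that could straddle, or more simply, each should individually exceed the left segment's slack so it cannot be placed left). A convenient way to sidestep (c) is to use a single padding sensor of radius exactly $\inv2(1-2p)$ when $1-2p$ is rational with denominator dividing $2B+2$ after rescaling; if $p$ is an arbitrary real in $(0,\half)$ I would instead absorb the discrepancy into $\rho_{n+1}$ by setting $\rho_{n+1} = \half - p + \inv{2B+2}\cdot(\text{something})$, choosing it so that left segment length $p - \rho_{n+1}$ and right segment length $1 - p - \rho_{n+1}$ become $\half B \cdot \frac{2}{2B+2}$ and, after diverting a fixed padding block, also $\half B \cdot \frac2{2B+2}$. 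Once the arithmetic of segment lengths is pinned down, the forward direction (\partition $\Rightarrow$ lifetime $a$) is an explicit placement exactly as before, and the reverse direction (positive lifetime $\Rightarrow$ \partition) follows from the granularity/immovability argument. As in the previous lemma, the dichotomy is between lifetime zero and lifetime at least $a$, which additionally yields the hardness-of-approximation corollaries stated in the introduction.
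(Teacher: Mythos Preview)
Your overall strategy matches the paper's: reduce from \partition, anchor an immovable blocking sensor at $p$, and use auxiliary ``padding'' sensors to absorb the asymmetry between the two segments. However, neither of your two concrete proposals for the padding works for all $p \in (0,\half)$, and the granularity argument does \emph{not} carry over verbatim once a padding sensor of incommensurable radius is present.

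Concretely: for a single padding sensor of radius $\half(1-2p)$ that is supposed to be ``forced'' right, the forcing fails when $p$ is close to $\half$, since the padding diameter $1-2p$ is then tiny while the left segment has length close to $\half$; nothing size-based prevents the padding from sitting on the left, and your condition ``each should individually exceed the left segment's slack'' cannot be met. For the alternative of absorbing the discrepancy into $\rho_{n+1} \approx \half - p$, the left segment $[0, p - \rho_{n+1}]$ has length $\approx 2p - \half$, which is non-positive for every $p \le \frac{1}{4}$; the blocking sensor then already covers $0$, there is no left gap into which to place the \partition sensors, and the reduction collapses.

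The paper closes this gap with one additional idea: use \emph{two} padding sensors and scale the \partition gadget by $d = \min\{p, 1-2p\}$. Sensor $n+2$ gets radius $\frac{p-d/2}{2}$ and sensor $n+3$ gets radius $\frac{1-p-d/2}{2}$; the choice of $d$ guarantees $\rho_{n+3} \ge \frac{1}{4}$, so (using the tight constraint $\sum_i 2\rho_i = 1$, which forbids overlap) $n+3$ cannot fit to the left of $n+1$'s coverage, which lies entirely in $(0,\half)$. A complementary argument then pins $n+2$ to the left. This leaves a \emph{symmetric} gap $[p - d/2,\, p + d/2]$ around $p$ to be filled by sensors $1,\ldots,n+1$, all of whose radii are multiples of $\frac{d}{2(B+1)}$; at that point the granularity argument from Lemma~\ref{lemma:bcfr-hard-half} really does go through verbatim. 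The two-sided padding together with the $d$-scaling is the missing ingredient in your plan.
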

\begin{proof}
Given a \partition instance $a_1,...,a_n$, let $B = \sum_i a_i$.  
We construct a \bcfr instance with $n+3$ sensors as follows: $x_i =
p$, for every $i$; 
\begin{align*}
\rho_i = &
\begin{cases}
\frac{a_i d}{2(B+1)} & i \leq n, \\
\frac{d}{2(B+1)}     & i = n+1, \\
\frac{p-d/2}{2}      & i = n+2, \\
\frac{1-p-d/2}{2}    & i = n+3;
\end{cases}
&&&
b_i = &
\begin{cases}
a\rho_i^\alpha + a & i \neq n+1, \\
a\rho_i^\alpha     & i = n+1.
\end{cases}
\end{align*}
where $d = \min\set{p,1-2p}$.
We show that $(a_1,\dots,a_n) \in \partition$ implies that there
exists a solution with lifetime $a$, and that the maximum lifetime is
zero if $(a_1,\dots,a_n) \not\in \partition$.

Supposed that $(a_1,\dots,a_n) \in \partition$, and let $I \subseteq
\set{1,\ldots,n}$ such that $\sum_{i \in I} a_i = \sum_{i \not\in I}
a_i$.  Define $\bar{I} = \set{1,\ldots,n} \setminus I$.
Set $r_i = \rho_i$, for every $i$, and use the following deployment:
\begin{enumerate}
\item Sensor $n+1$ does not move and covers 
      $[p-\frac{d}{2B+2}, p+\frac{d}{2B+2}]$.
\item Sensor $n+2$ moves to $\frac{p-d/2}{2}$ and covers $[0,p-d/2]$.
\item Sensor $n+3$ moves to $\frac{1+p+d/2}{2}$ and covers 
      $[p+d/2,1]$. 
\item The sensors that correspond to $I$ deploy such that they 
      cover $[p-d/2,p-\frac{p}{2B+2}]$.
\item The sensors that correspond to $\bar{I}$ deploy such that they 
      cover $[p+\frac{p}{2B+2}, p+d/2]$.
\end{enumerate}
(See example in Figure~\ref{fig:partition}.)
This is possible, since 
\(
\sum_{i \in I} 2\rho_i 
= \sum_{i \in I} \frac{a_i d}{B+1}
= \frac{B d}{2(B+1)}
= \frac{d}{2} - \frac{d}{2B+2}
\),
and similarly $\sum_{i \in \bar{I}} 2r_i = \frac{d}{2}
-\frac{d}{2B+2}$.  It is not hard to verify that a lifetime of $a$ is
achievable.

Suppose that $(a_1,\dots,a_n) \not\in \partition$, and assume that
there exists a solution $(y,r)$ with non-zero lifetime.  Notice that
$\sum_i 2\rho_i = 1$, and thus it must be that $r_i = \rho_i$, for
every $i$.
Since $\alpha \geq 1$, the battery of sensor $n+1$ is depleted if it
moves a distance of $\frac{d}{2B+2}$.  This means that $y_{n+1} \in (p
- \frac{d}{2B+2}, p + \frac{d}{2B+2})$.  Since $y_{n+1} < p +
\frac{d}{2B+2} \leq p + \frac{d}{2} \leq p + (\half-p) = \half$, and
\(
\rho_{n+3} 
= \half - \frac{p}{2} - \min \set{\frac{p}{4},\inv{4}-\frac{p}{2}} 
= \max \set{ \half - \frac{3p}{4}, \inv{4}} 
\geq \inv{4}
\), 
it follows that $n+3$ must be deployed such that its covering interval
is to the right of the interval of $n+1$, namely $y_{n+3}- \rho_{n+3}
\geq y_{n+1} + \rho_{n+1}$.
Next, observe that $\rho_{n+2} + \rho_{n+3} = \frac{p-d/2}{2} +
\frac{1-p-d/2}{2} = \frac{1-d}{2}$.  Since $y_{n+1} + \rho_{n+1} > p -
\frac{d}{2B+2} + \frac{d}{2B+2} = p \geq d$, it follows that sensor
$n+2$ must be deployed such that its covering interval is to the left
of the interval of $n+1$, namely $y_{n+2}+ \rho_{n+2} \leq y_{n+1} -
\rho_{n+1}$.  Without loss of generality we assume that sensors
$n+2$ and $n+3$ are adjacent to 0 and 1, respectively.
Since all remaining radii are multiples of $\frac{d}{2B+2}$, it
follows that $y_{n+1} = p$.  Hence there is a subset $I \subseteq
\set{1,\ldots,n}$ of sensors that covers the remaining uncovered area
to the left of $p-\frac{d}{2B+2}$, while the rest of the sensors cover
the remaining uncovered area to the right of $p+\frac{d}{2B+2}$.  Thus
\[\textstyle
\sum_{i \in I} a_i 
= \frac{B+1}{d} \sum_{i \in I} 2\rho_i 
= \frac{B+1}{d} \paren{ \frac{d}{2} - \frac{d}{2B+2} }
= \half B
~.  
\]
Hence, $(a_1,\dots,a_n) \in \partition$.  A contradiction.
\qed 
\end{proof}

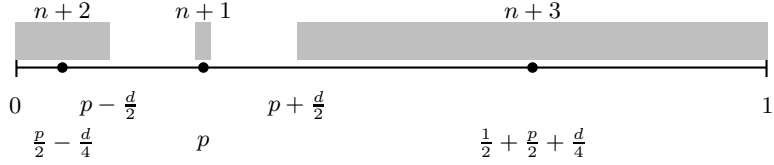
\begin{figure}[t]
\centering
\begin{tikzpicture}[scale=1,auto,thick,fill=lightgray,
dot/.style={circle,fill=black,inner sep=0pt,minimum size=4pt}] 
  \draw (0,0)[arrows=|-|] -- (10,0);
  \draw (0,-0.5) node {$0$};
  \draw (10,-0.5) node {$1$};
  \fill (2.4,0.1) rectangle (2.6,0.6);
  \node at (2.5,0) [dot] {};
  \draw (2.5,-1) node {$p$};
  \draw (2.5,0.75) node {$n+1$};
  \fill (0,0.1) rectangle (1.25,0.6);
  \node at (.625,0) [dot] {};
  \draw (.625,-1) node {$\frac{p}{2} - \frac{d}{4}$};
  \draw (1.25,-0.5) node {$p - \frac{d}{2}$};
  \draw (.625,0.75) node {$n+2$};
  \fill (3.75,0.1) rectangle (10,0.6);
  \node at (6.875,0) [dot] {};
  \draw (6.875,-1) node {$\half + \frac{p}{2} + \frac{d}{4}$};
  \draw (3.75,-0.5) node {$p + \frac{d}{2}$};
  \draw (6.875,0.75) node {$n+3$};
\end{tikzpicture}
\caption{Depiction of the deployment and radii assignment of sensors 
$n+1$, $n+2$, and $n+3$.}
\label{fig:partition}
\end{figure}

The following results are implied by Lemmas~\ref{lemma:bcfr-hard-half}
and~\ref{lemma:bcfr-hard-p}.

\begin{corollary}
There is no polynomial time multiplicative approximation algorithm for
\bcfr, unless P$=$NP, for any $a > 0$ and $\alpha \geq 1$, even if $x =
p^n$, where $p \in (0,1)$.
\end{corollary}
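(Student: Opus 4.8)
The plan is to reduce directly from the decision problem shown NP-hard in Lemmas~\ref{lemma:bcfr-hard-half} and~\ref{lemma:bcfr-hard-p}. Recall that those lemmas produce, from a \partition instance, a \bcfr instance with $x=p^n$ (for the desired $p\in(0,1)$, splitting into the cases $p=\half$ and $p\neq\half$) whose maximum lifetime is either $0$ (if the \partition instance is a no-instance) or at least $a$ (if it is a yes-instance), and that deciding which of these holds is NP-hard. The key observation is that a multiplicative approximation guarantee is vacuous when the optimum is $0$, and that is exactly what we exploit.

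First I would assume, for contradiction, that there is a polynomial-time algorithm $A$ that, on every \bcfr instance, returns a feasible solution whose lifetime is at least $\opt/c$, where $c\geq 1$ is the approximation ratio (possibly depending on the input size). Given a \partition instance, I construct the \bcfr instance of Lemma~\ref{lemma:bcfr-hard-half} or~\ref{lemma:bcfr-hard-p} (depending on whether $p=\half$) and run $A$ on it. If the \partition instance is a yes-instance, then $\opt\geq a$, so $A$ outputs a solution of lifetime at least $a/c>0$. If the \partition instance is a no-instance, then $\opt=0$, so \emph{every} feasible solution has lifetime $0$, and in particular the solution returned by $A$ has lifetime $0$. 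Thus the two cases are distinguished simply by testing whether the lifetime of $A$'s output is positive, which gives a polynomial-time decision procedure for the NP-hard problem, hence $\mathrm{P}=\mathrm{NP}$.

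I would then remark that the argument is insensitive to how $c$ behaves: it only uses $1\le c<\infty$, so it rules out any multiplicative approximation factor, even one that grows with $n$. I expect no real obstacle here; the only point that needs care is the phrasing of the dichotomy "lifetime $0$ versus lifetime $\ge a$" and the trivial but essential remark that no feasible solution can exceed $\opt$, so a $c$-approximation of $\opt=0$ must itself be $0$. (One should also recall, as in the hardness section, that the reductions are stated for integral $\alpha$ with exponentiation computable in polynomial time, the general case following by a suitable numerical approximation.) This yields the corollary for every $p\in(0,1)$, every $a>0$, and every $\alpha\ge 1$.
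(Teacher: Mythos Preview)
Your proposal is correct and matches the paper's intent: the paper does not spell out a proof of this corollary at all, simply stating that it is ``implied by Lemmas~\ref{lemma:bcfr-hard-half} and~\ref{lemma:bcfr-hard-p},'' and your argument is exactly the standard gap-based reasoning that makes that implication explicit.
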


\begin{corollary}
There is no polynomial time algorithm that computes a solution within
an additive factor $\eps$, for some $\eps>0$, unless P$=$NP, for any
$a > 0$ and $\alpha \geq 1$, even if $x = p^n$, where $p \in (0,1)$.
\end{corollary}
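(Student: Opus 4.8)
The plan is to obtain this as an immediate consequence of the \emph{gap} form of NP-hardness already established in Lemmas~\ref{lemma:bcfr-hard-half} and~\ref{lemma:bcfr-hard-p}. Those lemmas show that, for every $a>0$, $\alpha\geq 1$, and $p\in(0,1)$, it is NP-hard to decide, given a \bcfr instance with $x=p^n$, whether the maximum lifetime is $0$ or at least $a$; equivalently, the reduction from \partition produces only instances whose optimum $\opt$ lies in $\set{0}\cup[a,\infty)$. An algorithm with small additive error cannot bridge such a promise gap, which is exactly what we will exploit.

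Concretely, fix the constant $\eps=a/2$ (here $a$ is a fixed parameter of the problem family, so this is a legitimate positive constant), and suppose for contradiction that some polynomial-time algorithm $\mathcal{A}$ outputs, on every \bcfr instance, a feasible pair $(y,r)$ with $L(y,r)\geq \opt-\eps$. Run $\mathcal{A}$ on the instance produced by the reduction of Lemma~\ref{lemma:bcfr-hard-half} (or Lemma~\ref{lemma:bcfr-hard-p}). Since $(y,r)$ is feasible we always have $L(y,r)\leq\opt$. Hence, if $\opt=0$ then $L(y,r)=0$, and if $\opt\geq a$ then $L(y,r)\geq a-\eps=a/2>0$. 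Thus $L(y,r)>0$ if and only if $\opt\geq a$, so inspecting the output of $\mathcal{A}$ decides the underlying \partition instance in polynomial time, contradicting P$\neq$NP.

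Because $a$ and $\alpha$ range over all admissible values in the two lemmas, and the reduction keeps $x=p^n$ for any $p\in(0,1)$, the same argument yields the corollary for any $a>0$, any $\alpha\geq 1$, and the co-located case $x=p^n$. There is essentially no obstacle here: the only points to be careful about are that the additive inapproximability rests on the constant-width promise gap $[0,a]$ supplied by Lemmas~\ref{lemma:bcfr-hard-half} and~\ref{lemma:bcfr-hard-p} (rather than on rescaling the instance), and that one uses the trivial fact $L(y,r)\leq\opt$ for feasible $(y,r)$, so that the additive error cannot lift the value out of the ``zero'' case.
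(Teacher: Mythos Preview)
Your argument is correct and is exactly the intended one: the paper states this corollary (together with the multiplicative-inapproximability corollary) as an immediate consequence of Lemmas~\ref{lemma:bcfr-hard-half} and~\ref{lemma:bcfr-hard-p} without giving a separate proof. Your explicit choice $\eps=a/2$ and the observation that the reductions produce instances whose optimum is either $0$ or at least $a$ is precisely the gap argument those lemmas are designed to support.
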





\subsection{Variable Radii}

For \bcvr we show strong NP-hardness using a reduction from
\threepart%
\footnote{A \threepart instance consists of a list $a_1,\ldots,a_n$ of 
$n=3m$ positive integers such that $\frac{Q}{4} < a_i < \frac{Q}{2}$,
for every $i$, and $\sum_i a_i = mQ$, and the goal is to decide
whether the list can be partitioned into $m$ triples all having the
same sum $Q$.  \threepart remains NP-hard even if $Q$ is bounded above
by a polynomial in $n$.  In other words, the problem remains NP-hard
even when representing the integers in the input instance in unary
representation~\cite{GarJoh79}.}
that is based on the notion of \emph{block}, which is a set of evenly
spaced sensors with relatively small batteries.  A block battery
cannot move much, but together the block batteries can cover a long
interval, assuming they stay in their initial locations.  Formally, a
block $B = (z,\ell,b,\rho)$ is a set of $\ell$ sensors located at $z +
(2i-1)\rho$, for $i \in \set{1,\ldots,\ell}$.  The radius of each
block sensor is $\rho$, and the battery power of each sensor is $b$.
Typically, $\rho$ would be small, while $\ell$ would be large.


\begin{observation}
\label{obs:block}
Let $B = (z,\ell,b,\rho)$ be a block.  
\begin{inparaenum}[(i)]
\item $B$ can cover the interval $[z,z+2\ell\rho]$ for $b/\rho^\alpha$ 
      time, and 
\item no block sensor can cover points outside 
      $[z - \frac{b}{a},z + 2\ell\rho + \frac{b}{a}]$.
\end{inparaenum}
\end{observation}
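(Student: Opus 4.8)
The plan is to verify the two parts separately; part~(i) is essentially a definition-unwinding, while part~(ii) comes down to two elementary inequalities bounding how far a block sensor can travel and how large a radius it can afford.

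For part~(i) I would leave every block sensor at its initial position and assign it radius $\rho$. Sensor $i$, located at $z+(2i-1)\rho$, then covers $[z+(2i-2)\rho,\,z+2i\rho]$; these $\ell$ intervals are consecutive (each shares an endpoint with the next) and their union is exactly $[z,z+2\ell\rho]$. Since no sensor moves, sensor $i$ spends its entire battery $b$ on sensing with radius $\rho$, so $L_i=b/\rho^\alpha$, and hence the whole interval $[z,z+2\ell\rho]$ is covered for $b/\rho^\alpha$ time.

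For part~(ii), fix a block sensor $i$ with initial position $x_i=z+(2i-1)\rho$, so that $x_i\in[z+\rho,\,z+(2\ell-1)\rho]$. In any feasible deployment $y$ we have $a|y_i-x_i|\le b$, hence $y_i\in[x_i-\tfrac{b}{a},\,x_i+\tfrac{b}{a}]$. Next — and this is where the block's intended lifetime enters — if the block is to sustain the lifetime $b/\rho^\alpha$ from part~(i), i.e.\ the target lifetime in the reduction, then $\tfrac{b-a|y_i-x_i|}{r_i^\alpha}\ge\tfrac{b}{\rho^\alpha}$ forces $r_i^\alpha\le\rho^\alpha\bigl(1-\tfrac{a|y_i-x_i|}{b}\bigr)\le\rho^\alpha$, so $r_i\le\rho$. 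Combining the two bounds, $[y_i-r_i,\,y_i+r_i]\subseteq[x_i-\tfrac{b}{a}-\rho,\,x_i+\tfrac{b}{a}+\rho]$, and since $x_i-\rho\ge z$ and $x_i+\rho\le z+2\ell\rho$, this interval lies inside $[z-\tfrac{b}{a},\,z+2\ell\rho+\tfrac{b}{a}]$, as claimed.

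The only subtle point — the one I would be careful to pin down — is the radius bound $r_i\le\rho$ in part~(ii). Without requiring that a block sensor keep a lifetime of at least $b/\rho^\alpha$ (equivalently, a radius at most $\rho$), it could choose an arbitrarily large radius at the price of a vanishingly small lifetime and thereby reach arbitrarily far, making (ii) false as literally stated. I would therefore make explicit, here or in the reduction that invokes the observation, that blocks are always used with target lifetime $b/\rho^\alpha$; with that convention in place the remainder of the argument is just the two displayed inequalities together with arithmetic on interval endpoints.
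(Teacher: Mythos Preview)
Your argument is correct and matches the paper's in spirit. For part~(i) it is identical: leave every block sensor in place with radius $\rho$, observe the consecutive intervals tile $[z,z+2\ell\rho]$, and read off the lifetime $b/\rho^\alpha$. For part~(ii) the paper simply bounds the positions reachable by movement (at most $b/a$ in either direction, so the extreme sensors reach $z+\rho-b/a$ and $z+2\ell\rho-\rho+b/a$) and then pads by $\rho$ to get the stated interval; it does not spell out where the radius bound $r_i\le\rho$ comes from. You supply that missing step explicitly, deriving $r_i\le\rho$ from the target lifetime $b/\rho^\alpha$. Your caveat is well taken: as literally stated for \bcvr, part~(ii) needs the radius constraint, and the paper in fact establishes $r_i\le\rho$ separately in the proof of Theorem~\ref{thm:hardness-bcvr} before invoking the observation. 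So you and the paper agree on the mechanics; you are just more explicit about a hypothesis the paper leaves implicit in the observation and recovers at the point of use.
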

\begin{proof}
If a block battery remains in its initial position, it can stay alive
for $b/\rho^\alpha$ time.  Since the batteries are at distance $2\rho$
from their neighbors, the interval $[z,z+2\ell \rho]$ is covered.
A sensor can move at most $b/a$, hence the leftmost and rightmost
point that can be reached by a block sensor are $z+\rho-b/a$ and
$z+2\ell\rho-\rho+b/a$.  Hence, no point outside
$[z-b/a,z+2\ell\rho+b/a]$ can be covered by a block sensor.
\qed
\end{proof}

We are now ready to present the reduction.


\begin{theorem}
\label{thm:hardness-bcvr}
\bcvr is strongly NP-hard, for every $a > 0$ and $\alpha \geq 1$.
\end{theorem}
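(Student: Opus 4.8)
The plan is to reduce from \threepart, using the block gadget of Observation~\ref{obs:block} to create $m-1$ immovable ``walls'' that partition the barrier $[0,1]$ into $m$ slots, each of which must be filled by exactly three of the ``item'' sensors. First I would set a small spacing parameter $\delta$ (polynomial in the input since \threepart is strongly NP-hard, so $Q$ is polynomially bounded), and for each input number $a_i$ create a sensor with battery $b_i$ chosen so that, for the target lifetime $t=1$, the maximum reach $g^1_i(d^1_i)$ equals exactly $a_i\delta$ (using the closed form from Lemma~\ref{lemma:maxreach}); intuitively, sensor $i$ ``contributes'' a covered length proportional to $a_i$. I would place all item sensors near the left end (or scatter them — it will not matter). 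Between consecutive slots I insert a block $B_j=(z_j,\ell,b,\rho)$ with $\rho$ tiny and $b$ chosen so $b/\rho^\alpha \geq 1$ but $b/a$ is much smaller than $\delta$; the block occupies a short interval of width $\approx Q\delta$ worth of ``mandatory'' coverage that the item sensors cannot help with, and by Observation~\ref{obs:block}(ii) no block sensor can stray more than $b/a \ll \delta$ from its home interval. Slot $j$ is the gap between block $j-1$ and block $j$, and I engineer the total length of $[0,1]$ and the slot widths so that each slot has free length exactly $Q\delta$ that must be covered by item sensors.

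The forward direction is routine: given a valid partition into triples $I_1,\dots,I_m$ of sum $Q$ each, deploy the three sensors of $I_j$ inside slot $j$, each moving to its reach-maximizing offset $d^1_i$, so together they cover length $\sum_{i\in I_j} a_i\delta = Q\delta$, exactly the free length of the slot; the blocks stay put and cover their mandatory sub-intervals; every sensor sustains lifetime $1$ by construction. So $t=1$ is achievable.

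The reverse direction is the crux. Suppose a feasible pair $(y,r)$ has lifetime $\geq 1$. By Observation~\ref{obs:block}(ii), block sensors are essentially pinned to their slots' boundaries (wandering at most $b/a$), so the free length $Q\delta$ in each slot $j$ must be covered by item sensors whose covering intervals intersect slot $j$. Because the blocks are impassable barriers of positive width relative to how far an item sensor can reach past its own coverage, I claim each item sensor can contribute coverage to at most one slot (here I need to check the slot widths are small enough, using $d^1_i < b_i/a$ and the reach bound, that a single sensor cannot span across a block; this is a geometric packing argument, choosing constants so the ``overshoot'' of any item sensor past its coverage interval is $< \rho$ or $<$ block width). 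Then each slot $j$ receives coverage $\geq Q\delta$ from its assigned item sensors, and a sensor with lifetime $\geq 1$ covers length at most $2r_i \leq 2g^1_i(d^1_i)=2a_i\delta$... wait, more carefully: the length covered is $2r_i$ where $r_i=r_i(y_i,1)$, and the contribution to the union is at most $2r_i$, which is at most $2 a_i \delta$ only if the sensor is at $d^1_i$; in general the rightmost point minus leftmost point of its interval is $2r_i \le 2\sqrt[\alpha]{b_i/1}$, so I should instead bound the \emph{coverage length} directly. Actually the clean bound: for a sensor moving distance $|d|$, its covered interval has length $2r_i(d)=2\sqrt[\alpha]{(b_i-a|d|)/t}\le 2\sqrt[\alpha]{b_i/t}$, and one verifies $g^t_i(d^t_i)\le \sqrt[\alpha]{b_i/t}$ fails in general — so the right invariant is that the covered interval of sensor $i$ lies within $[x_i - g^t_i(d^t_i) \text{-type bounds}]$ and has length at most $2\sqrt[\alpha]{b_i/t}$. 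I will set $b_i$ so that $2\sqrt[\alpha]{b_i} = a_i\delta$ instead, making the length bound $2r_i \le a_i\delta$ the operative constraint. Then $\sum_{i \text{ in slot }j} a_i\delta \ge Q\delta$ forces $\sum_{i\in I_j} a_i \ge Q$ for the index set $I_j$ of sensors touching slot $j$; since the $I_j$ partition all $n$ items and $\sum_i a_i = mQ$, every inequality is tight and $\sum_{i\in I_j} a_i = Q$. Finally, the \threepart size constraint $Q/4 < a_i < Q/2$ forces $|I_j|=3$, giving a valid \threepart solution.

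The main obstacle is the reverse-direction geometry: I must choose $\delta$, $\rho$, $b$, $\ell$, and the $b_i$ so that (a) blocks genuinely cannot be circumvented or leveraged by item sensors (no item sensor's coverage spans a block), (b) block sensors' small drift $b/a$ does not free up or steal length from adjacent slots by more than a negligible amount that the integrality of the $a_i$ absorbs, and (c) all numbers stay polynomially bounded so the reduction is genuinely a strong-NP-hardness reduction. This is the same flavor of careful constant-juggling as in Lemma~\ref{lemma:bcfr-hard-p}, and I expect the bookkeeping — not any single idea — to be where the real work lies.
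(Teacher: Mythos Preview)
Your overall architecture matches the paper's: reduce from \threepart, use blocks as immovable walls to cut $[0,1]$ into $m$ slots of free length $Q\delta$, and argue that item sensor $i$ contributes covered length proportional to $a_i$. The forward/reverse skeleton and the use of Observation~\ref{obs:block} are exactly what the paper does.

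There is, however, a genuine gap in your parameter choice, and it is an idea rather than bookkeeping. With target lifetime $t=1$ and $b_i$ chosen so that $2\sqrt[\alpha]{b_i}=a_i\delta$, sensor $i$ attains covered length $a_i\delta$ \emph{only when it does not move}; any movement strictly shrinks $r_i$. Worse, with these numbers $b_i/a=(a_i\delta/2)^\alpha/a$ is of order $\delta^\alpha\ll\delta$, so item sensors are effectively pinned to their initial positions and cannot reach their slots at all --- the forward direction collapses regardless of whether you ``place all item sensors near the left end'' or ``scatter them''. Conversely, if you inflate $b_i$ enough to permit movement across $[0,1]$, the reverse-direction bound $2r_i\le 2\sqrt[\alpha]{b_i/t}$ becomes too loose for the integrality argument to bite.

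The paper resolves this tension by taking the target lifetime $T$ enormous (specifically $T=2aQ[2(2m-1)Q]^\alpha$) and setting $b_i=T(a_i\delta/2)^\alpha+a$. The additive $a$ lets every item sensor move anywhere in $[0,1]$, while the huge $T$ makes this movement slack negligible relative to the sensing budget: the maximum radius sustainable for time $T$ is at most $(b_i/T)^{1/\alpha}\le \tfrac{a_i\delta}{2}\cdot\tfrac{2Q+1}{2Q}$, close enough to $a_i\delta/2$ that the $<\tfrac{3}{8}\delta$ total slack per slot is absorbed by integrality of the $a_i$. This ``make $T$ large so sensing energy dominates movement energy'' step is the missing ingredient in your proposal; once you insert it, the rest of your outline goes through essentially as in the paper.
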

\begin{proof}
Given an \bcvr instance and $T$, we show that it is NP-hard to
determine whether the instance can stay alive for $T$ time.  


Given a \threepart instance, we construct the following \bcvr
instance.  Let $\delta = \inv{(2m-1)Q}$ and $T =
2aQ[2(2m-1)Q]^\alpha$.  There is a sensor for each input number: $x_i
= 0$, and $b_i = T (a_i \delta/2)^\alpha + a$, for every $i \in
\set{1,\ldots,n}$.  We also add $m-1$ blocks: $B_j = ((2j-1) Q \delta,
\ceil{Q\delta/2\rho}, T \rho^\alpha, \rho)$, for every $j$, where
$\rho = \frac{\delta}{4} \cdot \inv{[2(2m-1)Q]^\alpha}$.

The running time of the reduction is polynomial, since each block
contains $O(m^{\alpha} Q^{\alpha+1})$ sensors, and there are $m-1$
blocks.

We show that if $(a_1,\dots,a_n) \in \threepart$, then there exists a
solution with lifetime $T$.
Since this instance belongs to \threepart, there is partition of
$\set{1,\ldots,n}$ into $m$ index subsets $I_1,\ldots,I_m$, such that
$|I_j|=3$ and $\sum_{i \in I_j} a_i = Q$, for any $j$.  We set $r_i =
\delta a_i/2$ for every $i \leq n$, and we deploy the sensors in $I_j$
such that they cover $[2jQ \delta,(2j+1)Q\delta]$.  Observe that the
three sensors in $I_j$ can cover the interval, since $\sum_{i \in I_j}
2r_i = \sum_{i \in I_j} a_i \delta = Q \delta$.  Also, each such
sensor uses at most $a$ energy for deployment, and hence it has enough
energy to stay alive for $T$ time.
Block sensors are not moved and their radii are set to $\rho$.  Hence,
block sensors can stay alive for $T$ time.  Furthermore, due to
Observation~\ref{obs:block}, the sensors of block $j$ can cover the
interval $[(2j-1)Q\delta,(2j-1)Q\delta + 2
\rho\ceil{Q\delta/(2\rho)}]$ during their lifetime.  Observed that
this interval contains $[(2j-1)Q\delta,2jQ\delta]$.  Hence $[0,1]$ can
be covered for $T$ time.

Now supposed that there is a solution with lifetime $T$.  It follows
that the block sensors radii cannot be larger than $\rho$.  Hence,
Observation~\ref{obs:block} implies that the sensors of block $j$ do
not cover points outside 
\[
[(2j-1)Q\delta - T\rho^\alpha/a, 
 (2j-1)Q\delta + 2\rho\ceil{Q\delta/(2\rho)} + T\rho^\alpha/a]
~.
\]
Since 
\[\textstyle
T \rho^\alpha/a 
=    2Q[2(2m-1)Q]^\alpha \cdot \frac{\delta^\alpha}{4^\alpha} 
     \cdot [2(2m-1)Q]^{-2\alpha}
\leq \half Q \delta \cdot [2(2m-1)Q]^{-\alpha}
\leq \frac{\delta}{8}
~
\]
and
\[\textstyle
2\rho 
=    \frac{2\delta}{4} \cdot \inv{[2(2m-1)Q]^\alpha}
\leq \frac{\delta}{8}
~,
\]
we have that the sensors of block $j$ do not cover points outside
$[(2j-1)Q\delta - \frac{\delta}{8}, 2jQ\delta + \frac{\delta}{4}]$.
It follows that the interval $[2jQ\delta + \frac{\delta}{4},
(2j+1)Q\delta - \frac{\delta}{8}]$ must be covered by a subset of the
first $n$ sensors whose sum of radii is at least
$(Q-\frac{3}{8})\delta$.

Since
\[
T(a_i \delta/2)^\alpha
=    2aQ[2(2m-1)Q]^\alpha a_i^\alpha [2(2m-1)Q]^{-\alpha} 
=    2aQ a_i^\alpha 
~,
\]
we have that the battery power of sensor $i$ is
\[\textstyle
b_i 
=    2aQ a_i^\alpha + a 
\leq 2aQ a_i^\alpha \cdot \frac{2Q+1}{2Q} 
\leq T(a_i \delta/2)^\alpha \cdot (\frac{2Q+1}{2Q})^\alpha
~.
\]
Hence, the radius that can be maintained by sensor $i$ for $T$ time
is at most $\frac{a_i \delta}{2} \cdot \frac{2Q+1}{2Q}$.  Since $a_i <
Q/2$, this radius is smaller that $\delta Q$, and therefore the $n$
sensors can be partitioned into $m$ subsets $I_1,\ldots,I_m$, each
covering an interval of length $(Q-\frac{3}{8})\delta$.
We claim that $\sum_{i \in I_j} a_i \geq Q$ for every subset $j$.
If this is not the case, then $\sum_{i \in I_j} a_i \leq Q-1$, for
some $j$.  Hence,
\[\textstyle
\sum_{i \in I_j} \frac{a_i \delta}{2} \cdot \frac{2Q+1}{2Q}
\leq (Q-1) \delta \cdot \frac{2Q+1}{2Q}
=    \frac{2Q^2 - Q - 1}{2Q} \cdot \delta
<    (Q - \half) \delta
<    (Q -\frac{3}{8})\delta
~.
\]
Hence, we can partition $a_1,\ldots,a_n$ into $m$ subsets each of sum
at least $Q$, which means that $(a_1,\ldots,a_n) \in \threepart$.
\qed 
\end{proof}

\end{document}